\newtheorem{theorem}{Theorem}% 
\newtheorem{lemma}{Lemma}% 
\newtheorem{proposition}{Proposition}% 
\newtheorem{definition}{Definition}
\newtheorem{remark}{Remark}
\newtheorem{corollary}{Corollary}
\newtheorem{assumption}{Assumption}
\def\BibTeX{{\rm B\kern-.05em{\sc i\kern-.025em b}\kern-.08em
    T\kern-.1667em\lower.7ex\hbox{E}\kern-.125emX}}
\begin{document}

\title{
Quantum-Assisted Barrier Sequential Quadratic Programming for Nonlinear Optimal Control
}

\author{Nahid Binandeh Dehaghani, Rafal Wisniewski, A. Pedro Aguiar 
\thanks{N. Dehaghani and R. Wisniewski are with Department of Electronic Systems, Aalborg University, Fredrik Bajers vej 7c, DK-9220 Aalborg, Denmark
 %{\tt\small \{nahidbd,raf\}@es.aau.dk}}
 {\tt\small nahidbd@es.aau.dk, raf@es.aau.dk}}
\thanks{A. Aguiar is with the Research Center for Systems and Technologies (SYSTEC-ARISE), Faculty of Engineering, University of Porto, Rua Dr. Roberto Frias sn, i219, 4200-465 Porto, Portugal
        {\tt\small pedro.aguiar@fe.up.pt}}%
 % \thanks{The authors acknowledge the support of the Danish e-Infrastructure Consortium (DeiC) and the National Quantum Algorithm Academy (NQAA) through the Postdoctoral Scholarship 
 % %on Quantum Algorithms/Quantum Software,
 % under the project "Quantum-Driven Solutions for Multi-Agent Systems and Advanced Computation". This work was also partially supported by UID/00147- Research Center for Systems and Technologies (SYSTEC) - and the Associate Laboratory Advanced Production and Intelligent Systems (ARISE, 10.54499/LA/P/0112/2020) funded by Fundação para a Ciência e a Tecnologia, I.P./ MCTES through the national funds.}
 }

\maketitle

\begin{abstract}
We propose a quantum-assisted framework for solving constrained finite-horizon nonlinear optimal control problems using a barrier Sequential Quadratic Programming (SQP) approach. 
%The method reformulates the control problem as a nonlinear program via direct multiple shooting, addressing nonlinear dynamics and constraints. 
Within this framework, a quantum subroutine is incorporated to efficiently solve the Schur complement step using block-encoding and Quantum Singular Value Transformation (QSVT) techniques.
We formally analyze the time complexity and convergence behavior under the cumulative effect of quantum errors, establishing local input-to-state stability and convergence to a neighborhood of the stationary point, with explicit error bounds in terms of the barrier parameter and quantum solver accuracy.
The proposed framework  enables computational complexity to scale polylogarithmically with the system dimension
%This integration accelerates the solution of linear systems, with computational complexity scaling polynomially in the condition number rather than the problem dimension. We formally analyze the time complexity of each algorithmic component. Theoretical analysis establishes local input-to-state stability and global convergence to a neighborhood of the stationary point, with explicit error bounds in terms of the barrier parameter and quantum solver accuracy. 
%The proposed framework 
demonstrating the potential of quantum algorithms to enhance classical optimization routines in nonlinear control applications.
\end{abstract}

% \begin{IEEEkeywords}
% Quantum Optimization, Block Encoding, Quantum Singular Value Transformation (QSVT), Nonlinear Optimal Control, Sequential Quadratic Programming (SQP), Quantum Algorithms and Complexity.
% \end{IEEEkeywords}
% \vspace*{0.2cm}

\section{Introduction}
Nonlinear optimal control problems (OCPs) are central to a wide range of engineering applications \cite{rawlings2020model,grune2016nonlinear}.
%\pedro{and more others refs}.
A common approach is to apply direct methods, by discretizing the problem and transforming it into a finite-dimensional nonlinear program (NLP).
To solve such NLPs, sequential quadratic programming (SQP) is widely adopted due to its strong convergence properties and its ability to handle general nonlinear constraints.
However, SQP relies on solving quadratic programming (QP) subproblems at each iteration, and the computational cost of these subproblems grows rapidly with problem size, limiting its applicability in large-scale and real-time control scenarios.

Recent advances in quantum algorithms for linear algebra, particularly block encoding and quantum singular value transformation (QSVT) \cite{gilyen2019quantum}, offer the potential to accelerate the solution of structured linear systems, eigenvalue problems, and optimization subroutines. 
While quantum methods have been explored for certain classes of optimization problems \cite{farhi2014quantum, childs2017quantum, harrow2009quantum,dehaghani2025LQG}, their integration into iterative nonlinear optimization frameworks—especially when incorporating uncertainty and robustness considerations to ensure that algorithmic guarantees persist under the cumulative effect of quantum errors—remains largely unexplored. %Specifically, the nonlinear model, Jacobian computations, and constraint handling remain in the classical domain, while the QP subproblem, namely Hessian inversion and step-size computation, is mapped to block-encoded quantum operations.
The key contributions are: (i) the development of a barrier SQP algorithm with quantum subroutines while preserving convergence guarantees; (ii) the use of block-encoding and QSVT-based inversion to efficiently solve the Schur complement systems; 
%(iii) a quantum step-size update ensuring feasibility and stability; 
(iii) complexity and error analysis, including input-to-state stability guarantees; (iv) validation on the Perelson HIV-1 model \cite{perelson1999}.
%, showing classical-level accuracy with favorable scaling for larger problems.

%\textbf{Paper Organization:} Section~\ref{sec:problem_formulation} formulates the nonlinear OCP and its SQP transcription. Section~\ref{sec:hybrid_sqp} presents the hybrid quantum--classical SQP framework, including block encoding and QSVT-based subroutines. Section~\ref{sec:complexity_analysis} provides a detailed analysis of the algorithm’s time complexity and quantum resource requirements. Section~\ref{sec:convergence} establishes convergence properties and error bounds for the proposed method. Section~\ref{sec:simulations} presents simulation results, and Section~\ref{sec:conclusion} concludes the paper.

\section{Finite-Horizon Nonlinear OCP}
Consider the discrete-time nonlinear dynamical system %of the form
    $$x_{k+1} = f(x_k, u_k), \quad k = 0, 1, \dots, N-1,$$
where \( x_k \in \mathbb{R}^n \) denotes the state, \( u_k \in \mathbb{R}^m \) the control input, and \( f: \mathbb{R}^n \times \mathbb{R}^m \rightarrow \mathbb{R}^n \) the system dynamics.  
We assume that \( f \) is Lipschitz in \(x\), and continuously differentiable in both \(x\) and \(u\). 
The initial state is fixed as \( x_0 = x_{\mathrm{init}} \).
The finite-horizon optimal control problem is to determine sequences 
\(\{x_k, u_k\}_{k=0}^{N-1}\) that minimize a cumulative cost functional subject to dynamics and constraints:
\begin{equation}\label{eq:OCP}
\begin{aligned}
    \min_{\{x_k, u_k\}} \quad & J = \sum_{k=0}^{N-1} \ell(x_k, u_k) + \ell_N(x_N) \\
    \text{s.t.} \quad & x_{k+1} = f(x_k, u_k), \quad k = 0, \dots, N-1, \\
    & c(x_k, u_k) \leq 0, \quad k = 0, \dots, N-1, \\
    & c_N(x_N) \leq 0, \\
    & x_0 = x_{\mathrm{init}},
\end{aligned}
\end{equation}
where \(\ell: \mathbb{R}^n \times \mathbb{R}^m \to \mathbb{R}\) denotes the stage cost, 
\(\ell_N: \mathbb{R}^n \to \mathbb{R}\) the terminal cost, 
and \(c:\mathbb{R}^n\times\mathbb{R}^m\to\mathbb{R}^p\), \(c_N:\mathbb{R}^n\to\mathbb{R}^{p_N}\) represent path and terminal inequality constraints, respectively. 
We assume that the cost function is bounded below, and that both the cost and constraint functions are continuously differentiable. Furthermore, we assume that the optimal control problem admits at least one feasible solution.
A broad class of numerical algorithms has been developed for solving~\eqref{eq:OCP}.  
Most methods rely on direct transcription, reformulating the OCP as a nonlinear program (NLP)~\cite{rawlings2020model,grune2016nonlinear}.

\paragraph*{NLP formulation} 
To solve the optimal control problem~\eqref{eq:OCP},
%numerically, 
we adopt a direct multiple-shooting transcription, which treats all $\{x_k,u_k\}$ as decision variables and imposes the dynamics as equality constraints. This improves numerical robustness and enables parallel evaluation of dynamics and sensitivities.  
We define the stacked decision vector for each stage as
$z_k := \begin{bmatrix} x_k^T & u_k^T \end{bmatrix}^T \in \mathbb{R}^{n+m}, 
\quad k = 0,\dots,N-1$,
and the terminal variable as \(z_N := x_N\in \mathbb{R}^n\).  
The complete decision vector is then
$z := \begin{bmatrix} z_0^T & \cdots & z_{N-1}^T & z_N^T \end{bmatrix}^T 
\in \mathbb{R}^{n_z}$.
where $n_z=N(n+m)+n$. The NLP reads
\begin{equation} \label{eq:NLP}
\begin{aligned}
\min_{z} \quad 
& F(z) := \sum_{k=0}^{N-1} \ell(z_k) + \ell_N(z_N) \\
\text{s.t.}\quad 
& G(z) = 0,\qquad H(z) \le 0,
\end{aligned}    
\end{equation}
where
$G(z)$ and $H(z)$ are adequate vectors.

% $G(z) := 
% \left[\begin{smallmatrix}
% x_0 - x_{\mathrm{init}} \\
% x_1 - f(x_0, u_0) \\
% x_2 - f(x_1, u_1) \\
% \vdots \\
% x_N - f(x_{N-1}, u_{N-1})
% \end{smallmatrix}\right]
% %\in \mathbb{R}^{(N+1)n},
% H(z) :=
% \left[\begin{smallmatrix}
% c(x_0, u_0) \\
% c(x_1, u_1) \\
% \vdots \\
% c(x_{N-1}, u_{N-1}) \\
% c_N(x_N)
% \end{smallmatrix}\right]$.

%A wide class of NLP solvers can be applied to this problem. 
%In particular, Sequential Quadratic Programming (SQP) and interior-point methods are among the most widely used. These methods iteratively approximate the nonlinear problem with local subproblems that involve linearized constraints and quadratic approximations of the objective around the current iterate.
In our approach, we employ a barrier SQP method. 
The inequality constraints are incorporated through a barrier term, leading to the barrier-augmented objective
$\tilde{F}(z;\mu)\;=\;F(z)\;+\;\mu \sum_{j=1}^{q} \phi\!\big(H_j(z)\big)$,
where $\mu>0$ is the barrier parameter, $q=Np+p_N$ is the total number of 
inequality constraints, and 
 $\phi : (-\infty, 0) \to \mathbb{R}$ is a barrier function that is strictly convex and twice continuously differentiable on its domain. Moreover, it satisfies $\lim_{s \to 0^-} \phi(s) = +\infty$, with $\phi'(s) < 0$ and $\phi''(s) > 0$ for all $s < 0$. A typical example is the logarithmic barrier $\phi(s) = -\log(-s)$.
This reformulation approximately formulate the inequality constrained NLP problem as an equality constrained problem.
%Algorithm~\ref{alg:hybrid} describes...
A general procedure to solve the resulting optimization problem is as follows 
%\pedro{Include ref Book S. Boyed}
\cite{wright1997primal,boyd2004convex}:
Starting with an initial guess \(z^{(0)}\) and barrier parameter \(\mu = \mu_0\) at iteration \(i=0\), the method proceeds iteratively through three main steps until convergence.

\paragraph*{Step 1: Linearization and Quadratic Approximation} 
At iteration \(i\), the cost, constraints, and dynamics are locally approximated around the current nominal trajectory \(z^{(i)}\).
Specifically, nonlinear equality constraints are linearized via first-order Taylor expansions, and the barrier-augmented objective is approximated by a second-order Taylor expansion.  
Introducing 
%the 
deviation variable
    $\Delta z^{(i)} = z - z^{(i)}$,
the resulting subproblem takes the form of a quadratic program (QP)
\begin{align*}
    \min_{\Delta z} \quad  \tfrac{1}{2} \Delta z^T Q \,\Delta z + \Delta z^Tg \quad
      \text{  s.t.} \quad  \mathcal{A} \,\Delta z = r,
\end{align*}
with $Q =\nabla^2 F\!(z^{(i)})
   + \mu \sum_{j=1}^{q} \phi''\!\big(H_j(z^{(i)})\big)\,\nabla H_j\!(z^{(i)})\nabla H_j\!(z^{(i)})^T,$
   $g = \nabla F\!\left(z^{(i)}\right)
   + \mu \sum_{j=1}^{q} \phi'\!\big(H_j(z^{(i)})\big)\,\nabla H_j\!\left(z^{(i)}\right)$, $r = -G(z^{(i)})$,
$\mathcal{A} = \nabla_z G(z^{(i)})$.
Here, \(Q\) is the barrier-augmented Gauss--Newton Hessian of the Lagrangian, \(g\) the barrier-augmented gradient, 
%\(\mathcal{M}\) the Jacobian of the inequality constraints, 
and \(\mathcal{A}\) the Jacobian of the equality constraints.
%Note that the block structure of the dynamics is preserved:  \(Q\) remains block diagonal (due to separability across stages).
%and becomes strictly positive definite after adding the barrier curvature term.  The matrix \(\mathcal{A}\) is block banded, reflecting the stagewise coupling of the dynamics.  

\paragraph*{Step 2: Solve the QP Subproblem}
Introducing the Lagrange multiplier $\lambda$ associated with the equality constraint, the quadratic subproblem can be written in the Karush–Kuhn–Tucker (KKT) form %system form
$ [
\begin{smallmatrix} 
Q & \mathcal{A}^T \\
\mathcal{A} & 0
\end{smallmatrix}]
[\begin{smallmatrix}
\Delta z \\[3pt] \lambda
\end{smallmatrix}]
=
[\begin{smallmatrix}
-g \\[3pt] r
\end{smallmatrix}]$.
Applying block elimination with the Schur complement reduces the system to $S \lambda= b$, $Q \Delta z = -\big(g+ \mathcal{A}^T \lambda\big)$, S = $\mathcal{A} Q^{-1} \mathcal{A}^T$,  
$b = -r - \mathcal{A} Q^{-1} g$,
where the Schur complement $S$ inherits important structural properties: it is symmetric positive semidefinite, and, due to the stagewise structure of \(\mathcal{A}\), it is sparse and block-banded.  
These properties can be exploited for efficient factorization and solution of the linear system. 

\paragraph*{Step 3: Update $z^{i+1}$ and convergence check}
After computing the step direction \(\Delta z\) from the quadratic subproblem, the nominal trajectory is updated according to
\[
z^{(i+1)} = z^{(i)} + \alpha \Delta z,
\]
where \(\alpha \in (0, 1]\) is a step size chosen to ensure convergence and feasibility.
To obtain $\alpha$, the step size is first capped using a fraction-to-the-boundary rule to reduce the number of backtracking steps
% :
% $\alpha_{\max} = \min\!\left(
% 1,\; \theta \min_{j:\,\nabla H_j(z)^T \Delta z > 0}
% \frac{-H_j(z)}{\nabla H_j(z)^T \Delta z}
% \right)$, $\theta \in (0,1)$,
and the line search is initialized with $\alpha \gets \alpha_{\max}$. This guarantees that $H(z+\alpha\Delta z) < 0$ from the start, preserving strict feasibility.
Next, starting from this initial value, a backtracking line search is performed. The step size is iteratively reduced according to $\alpha \leftarrow \tau \alpha$, $\tau \in (0,1),$
until two conditions are satisfied. First, strict feasibility must hold:
$H\big(z^{(i)} + \alpha \Delta z\big) < 0$.
Second, the barrier-augmented objective must satisfy the Armijo condition:
$\tilde{F}\big(z^{(i)} + \alpha \Delta z;\mu\big)
\le\tilde{F}\big(z^{(i)};\mu\big) + c\,\alpha\, g^T \Delta z$,
where 
$c \in (0,1)$ is the Armijo parameter. This 
%procedure 
ensures %that 
each step remains feasible while achieving sufficient decrease in the objective.

Convergence of the barrier SQP method is declared if 
%$\mu \le \mu_{\min}$ for some small $\mu_{\min}>0$ and 
the optimality and feasibility tolerances are met.
%, that is, $\left\|\nabla \tilde{F}\!\left(z^{(i+1)};\mu\right)\right\| \le \varepsilon_{\mathrm{opt}}$,  $\left\|G\!\left(z^{(i+1)}\right)\right\| \le \varepsilon_{\mathrm{feas}}$.
If not, 
 the barrier parameter is updated according to a given rule\footnote{Unlike some classical barrier methods, the proposed algorithm does not employ inner loops, simplifying the overall structure. With an appropriate update rule, the classical approach with inner iterations can be recovered as a special case.}
$\mu^{(i+1)} = \mathcal{U}_\mu\!\big(\mu^{(i)},\,\cdot \big)$,
where $\mathcal{U}_\mu$ is a user-defined update function that may decrease $\mu$ (e.g., $\mathcal{U}_\mu = \beta \mu^{(i)}$, $\beta\in(0,1)$), keep it constant for several iterations, or adaptively adjust it based on progress measures (e.g., residual norms). 
Next, the algorithm proceeds with another iteration.%(Step 1).

\begin{algorithm}[t]
\small
\caption{Hybrid Barrier SQP with Quantum Schur Step}
\label{alg:hybrid}
\begin{algorithmic}[1]
\Statex \textbf{Inputs:} initial feasible $z_{0}$ with $H(z_{0})<0$; initial barrier $\mu_0>0$;
Armijo $c\in(0,1)$; step tolerance $\varepsilon_{\Delta z}$; backtracking $\tau\in(0,1)$; tolerances $\varepsilon_{\mathrm{opt}}, \varepsilon_{\mathrm{feas}}, \mu_{\min}$.
\Statex \textbf{Output:} $z^\star$
\State Set $i \gets 0$, $z^{(i)} \gets z_{0}$, $\mu^{(i)} \gets \mu_0$.
\While{$\mu^{(i)} > \mu_{\min}$} \Comment{Barrier loop}
  \Statex \textbf{(Step 1) Linearization \& Quadratic Approximation at $z^{(i)}$:}
  \State Evaluate $G \gets G(z^{(i)})$, $H \gets H(z^{(i)})$, $\mathcal{A} \gets \nabla_z G(z^{(i)})$.
  \State Barrier-augmented gradient:
   $g \; \gets \; \nabla F\!\big(z^{(i)}\big)
   + \mu^{(i)} \sum_{j=1}^{q} \phi'\!\big(H_j(z^{(i)})\big)\,\nabla H_j\!(z^{(i)})$.
  \State Barrier (Gauss--Newton/stage-wise) Hessian:
$    Q \;\gets\; \nabla^2 F\!\left(z^{(i)}\right)
   + \mu^{(i)} \sum_{j=1}^{q} \phi''\!\big(H_j(z^{(i)})\big)\,\nabla H_j\!\big(z^{(i)}\big)\nabla H_j\!\big(z^{(i)}\big)^T$.
  \State Linearized equalities residual: $r \gets -\,G$.
  \vspace{2pt}

  \State \textbf{(Step 2) Quantum Schur step:}\\
$      (\Delta z) \;\gets\; \texttt{QuantumSchurStep}\big(Q,\mathcal{A},g,r; \varepsilon_{\Delta z}\big)$
  \emph{(This subroutine encapsulates $S=\mathcal{A}Q^{-1}\mathcal{A}^T$, $b=-r-\mathcal{A}Q^{-1}g$,
  solve $S\lambda=b$, and $\Delta z=-Q^{-1}(g+\mathcal{A}^T\lambda)$.)}
  \vspace{2pt}

\State \textbf{(Step 3) Fraction-to-the-boundary initialization \& backtracking:}
 $ \alpha_{\max} \;\gets\; \min\!\left(1,\; \theta \min_{j:\,\nabla H_j(z^{(i)})^T \Delta z > 0}
   \frac{-H_j(z^{(i)})}{\nabla H_j(z^{(i)})^T \Delta z}\right)$, $\theta \in (0,1)$
  \State Initialize $\alpha \gets \alpha_{\max}$.
  \While{$\exists j:\ H_j(z^{(i)}+\alpha \Delta z) \ge 0$
         \textbf{ or }
         $\tilde{F}(z^{(i)}+\alpha \Delta z;\mu) >
          \tilde{F}(z^{(i)};\mu) + c\,\alpha\, g^T \Delta z$}
     \State $\alpha \gets \tau\,\alpha$
  \EndWhile
  \State \textbf{Update iterate:} $z^{(i+1)} \gets z^{(i)} + \alpha\,\Delta z$.
  \vspace{2pt}

  \State \textbf{convergence check:}
 % \State \textbf{Stop test at fixed $\mu$:}
    \If{$\|\nabla F(z^{(i+1)})\|\le \varepsilon_{\mathrm{opt}}$ \textbf{and} $\|G(z^{(i+1)})\|\le \varepsilon_{\mathrm{feas}}$}
      \State \textbf{break}
    \EndIf
  \State \textbf{Update barrier and continue:} $\mu^{(i+1)} \gets \mathcal{U}_\mu\!\big(\mu^{(i)}, \cdot\big)$, \, $i\gets i+1$
\EndWhile
\State \textbf{Return} $z^\star \gets z^{(i)}$
\end{algorithmic}
%\vspace{-4mm}
\end{algorithm}

\section{Hybrid Quantum--Classical SQP
%: Integration of the Quantum Schur Step
}
This section describes how a quantum Schur complement solver is integrated into the classical barrier-SQP loop. The outer operations—barrier updates, evaluation of nonlinear functions and gradients, and backtracking line search—remain purely classical. The computational bottleneck, namely the linear algebra required to solve the KKT system, is delegated to a quantum subroutine that leverages block-encoding and QSVT-based primitives.
%At each SQP iteration with fixed barrier parameter $\mu$, the classical routine assembles the linearized equality constraints $r$, the barrier-augmented gradient $g$, and the damped Hessian approximation $Q$. The dominant cost arises in the Schur complement step for constructing $S$, forming the right-hand side $b$, and recovering the primal step $\Delta z=-Q^{-1}(g+\mathcal{A}^T\lambda)$. 
The complete hybrid loop is summarized in Algorithm~\ref{alg:hybrid}.
%We integrate the quantum routine by replacing the classical Schur solve and the primal backsolve with a single call to Algorithm ~\ref{Qschur}. The quantum subroutine returns 
%classical vector 
%$\Delta z$, which is then 
%(obtained by measurement and postprocessing). These are then 
%passed to the backtracking line search and to the barrier update. 

\begin{algorithm}[t]
\small
\caption{Quantum Schur Complement Step via Block Encodings and QSVT %$\texttt{QuantumSchurStep}\big(Q,\mathcal{A},g,r; \varepsilon_{\Delta z}\big)$
}
\label{Qschur}
\begin{algorithmic}[1]
\Statex \textbf{Input:}
Access to all given matrices and vectors $Q$, $\mathcal{A}$, $g$ $r$.
\Statex \textbf{Output:} Classical vector $\Delta z$
\Statex \textbf{Procedure:}
\State Prepare block encoding \( U_Q \) and \( U_{\mathcal{A}} \), \( U_{g} \), \( U_{r} \), with normalization factors $\alpha_{Q},\alpha_{\mathcal{A}}, \alpha_{g},\alpha_{r}$.
\Statex \textbf{Step 1: Form Schur System}
\State Invert \( Q \): Construct $ U_{Q^{-1}}  \leftarrow QSVT-Invert(U_Q)$
\State Form \( S \): Construct the Schur Complement: 
\( U_S \gets U_{\mathcal{A}} \cdot U_{Q^{-1}} \cdot U_{\mathcal{A}}^T \) 
\State Form the right-hand side using LCU: \( U_b \gets-U_r-U_{\mathcal{A}} U_{Q^{-1}} U_g \) 
%\State Measure $U_S$ to obtain $S$.
%\State Prepare block encoding \( U_S \) with normalization factor $\|S\| \leq \alpha_{S}$

%Obtain spectral information of $\widetilde{U}_S$ (norm and condition number) via singular value estimation. 
%\State Reconstruct a properly normalized block encoding $\widetilde{U}_S$ with factor $\alpha_S \geq \|S\|$.
\Statex \textbf{Step 2: Solve for Multipliers}
%\State Estimate spectral properties via QSVE: $\ \; \kappa_S \;\gets\; \text{QSVE}(U_S)$
\State Invert Schur complement: $U_{S^{-1}} \leftarrow QSVT-Invert(U_S)$
\State Form $U_{\lambda} \leftarrow U_{S^{-1}} U_b$
\Statex \textbf{Step 3: Primal Step Recovery and Classical Readout}
\State Form intermediate encoding using LCU: $U_{1} \leftarrow U_g + U_{\mathcal{A}}^T U_{\lambda}$
\State Compute: $U_{\Delta z} \leftarrow - U_{Q^{-1}} U_{1}$ 
\State Conducting measurements on $U_{\Delta z}$ to obtain $\Delta z$
\end{algorithmic}
\end{algorithm}
\subsection{Quantum Assited framework for SQP }
We first introduce the concepts underlying Algorithm~\ref{Qschur}.
%In the proposed hybrid SQP framework, block encodings serve as the bridge between classical problem data and quantum linear--algebraic primitives. They provide a unified representation of matrices as subblocks of unitary operators, enabling coherent manipulation on quantum hardware.  
%In particular, matrices such as the constraint Jacobian $\mathcal{A}$, the Hessian approximation $Q$, and their adjoints are accessed through block-encoded unitaries $U_{\mathcal{A}}, U_Q,$ and $U_{\mathcal{A}}^\dagger$.

\begin{definition}[Block Encoding]
Let $A \in \mathbb{C}^{n \times n}$ act on $s$ qubits 
%\raf{$n = 2^s$, just to relate $n$ with $s$}. 
so that $n = 2^s$.
Suppose there exist a normalization factor $\alpha > 0$, error tolerance $\varepsilon \ge 0$, ancilla count $a \in \mathbb{N}$, and a unitary $U_A \in \mathbb{C}^{2^{s+a} \times 2^{s+a}}$ such that
$\Big\| A - \alpha \big( \langle 0^a | \otimes I_s \big) U_A \big( |0^a \rangle \otimes I_s \big) \Big\| \leq \varepsilon$,
then $U_A$ is an $(\alpha_A,a,\varepsilon)$-block encoding of $A$. 
%Intuitively, $A$ appears in the top-left block of $U_A$, up to normalization and bounded error.
%We denote the set of all such block encodings of $A$ by $\mathrm{BE}_{\alpha,a}(A,\varepsilon)$.
\end{definition}

\begin{remark}[Matrix Padding for Block Encoding]
If $A \in \mathbb{C}^{m \times n}$ with $m,n \le 2^s$ for some integer $s$, then $A$ can be embedded into a square matrix $A_e \in \mathbb{C}^{2^s \times 2^s}$ by placing $A$ in the top-left $m \times n$ block and setting all other entries to zero. The padded matrix $A_e$ can then be block-encoded with respect to the full $2^s$-dimensional Hilbert space \cite{gilyen2019quantum}.
%, ensuring compatibility with the unitary representation required by quantum circuits~.
\end{remark}

%Block encodings are highly composable: they can be manipulated to represent sums, products, and scaled versions of matrices, which is crucial for constructing structured operators like the Schur complement.

\begin{remark}[Addition of Block Encodings] \label{rem:add}
%Beyond representing individual matrices, block encodings can also be combined algebraically.
Let \( U_A \) be an \( (\alpha_A, a, \epsilon_A) \)-block encoding of a matrix \( A \), and let \( U_B \) be a \( (\alpha_B, b, \epsilon_B) \)-block encoding of a matrix \( B \), both acting on \( s \) qubits. Then, there exists an \( (\alpha_A + \alpha_B, \max(a, b) + 1, \epsilon_A+\epsilon_B ) \)-block encoding of \( A + B \), achieved by defining the controlled unitary
$U = \ket{0}\bra{0} \otimes U_A + \ket{1}\bra{1} \otimes U_B$,
and applying it within the sandwich construction:
$(V^\dagger \otimes I)\, U\, (V \otimes I)$,
where $ V=\frac{1}{\sqrt{\alpha_A+ \alpha_B}} 
[\begin{smallmatrix} 
 \sqrt{\alpha_A} & \cdot \\
\sqrt{\alpha_B} & \cdot
\end{smallmatrix} ]$ is a unitary.
This construction is referred to as the linear combination of unitaries (LCU) %technique \cite{lin2022lecture,chakraborty2018power}.
\end{remark}
\begin{remark}[Multiplication of Block Encodings]
\label{rem:mul}
\noindent
Let $U_A$ and $U_B$ be block encodings of matrices $A$ and $B$, with parameters $(\alpha_A, a, \epsilon_A)$ and $(\alpha_B, b, \epsilon_B)$, respectively. Then the unitary $U = (I_b \otimes U_A)(I_a \otimes U_B)$
is an \( (\alpha_A \alpha_B, a + b, \alpha_A\epsilon_A + \alpha_B \epsilon_B) \)-block encoding of \( AB \), where the ancilla identity operators \( I_b \) and \( I_a \) act on registers of sizes \( b \) and \( a \) %respectively
%, ensuring proper alignment across tensor factors 
\cite{lin2022lecture,chakraborty2018power}.
\end{remark}
\begin{remark}[Matrix Inversion via QSVT] \label{rem:QSVT}
Consider matrix
%a squ-\\are matrix 
$A \in \mathbb{C}^{n \times n}$ with singular value decomposition
$A = W \Sigma V^\dagger$,
where $\Sigma$ is diagonal with singular values $\sigma_i \in [\kappa^{-1},1]$, and $\kappa$ denotes the condition number of $A$. The inverse is given by
$A^{-1} = V \,\mathrm{diag}\!\big(f(\sigma_i)\big) W^\dagger$, with $f(x) = x^{-1}$.
The QSVT framework allows approximation of $f(x)$ on the spectrum of $A$ using polynomial transformations applied to a block encoding of $A$. Since $f(x)=1/x$ is odd, we approximate it by an odd polynomial $p(x)$ such that
$\Big|p(x) - \tfrac{1}{\kappa \beta}\tfrac{1}{x}\Big| \leq \varepsilon^\prime$, $\forall x \in \Big[-1,-\tfrac{1}{\kappa}\Big] \cup \Big[\tfrac{1}{\kappa},1\Big]$,
with the additional constraints $p(-x)=-p(x)$ (odd parity) and $|p(x)| \leq 1$ on the same domain. 
%This ensures $p(x)$ behaves like the scaled inverse $1/(\kappa x)$ while remaining bounded.
The factor $\beta$ is chosen to ensure that $|p(x)| \leq 1$ for all $x \in [-1,1]$.
% To implement this, we first prepare block encodings of $A$ and its adjoint.
% %,$U_A \in \mathrm{BE}_{1,a}(A)$, $U_{A^\dagger} \in \mathrm{BE}_{1,a}(A^\dagger)$.
% The polynomial $p(x)$ is then realized by a sequence of controlled unitaries determined by a set of phase factors
% $\tilde{\Phi}=(\tilde{\phi}_0,\tilde{\phi}_1,\ldots,\tilde{\phi}_d)$,
% derived from the coefficients of $p(x)$. 
The resulting block encoding of $A^{-1}$ has normalization factor
  $\alpha_{A^{-1}} \;=\; \frac{\kappa\,\beta}{\alpha_A}$,
and overall inversion error tolerance
  $\varepsilon_{A^{-1}}$ proportional to 
  %the error tolerance 
  $\varepsilon_A$ added with 
$\varepsilon^\prime$ \cite{gilyen2019quantum}.
%obtained by this procedure is proportional to $\varepsilon ^\prime=\varepsilon+\varepsilon_p $ where $\varepsilon$ is the error tolarence of block encoding $A$.
%The final operator $p(A)$ obtained by this procedure is an $\varepsilon ^\prime$-approximation to $\frac{1}{\kappa \beta} A^{-1}$.
%Thus, block encodings combined with QSVT-based polynomial transformations yield an efficient quantum routine for approximately inverting well-conditioned matrices.
%This tool underlies the quantum subroutines used in our hybrid SQP framework.
\end{remark}

\subsubsection*{Quantum Schur Step Algorithmic Description}
Algorithm~\ref{Qschur} presents the quantum subroutine for solving the KKT system within the barrier-SQP iteration.  
It operates on block encodings of the problem data $(Q,\mathcal{A},g,r)$ with normalization factors 
$\|Q\| \leq \alpha_{Q}, \|\mathcal{A}\| \leq \alpha_{\mathcal{A}},
\|g\| \leq \alpha_{g},
\|r\| \leq \alpha_{r}$, and employs QSVT-based inversion for the required linear solves. 
The output is the classical search direction $\Delta z$, obtained by measuring the quantum state encoding the solution and postprocessing the result. 
Figure~\ref{fig:quantum-schur-circuit} illustrates a high-level representation of the circuit. QSVT is first applied to approximate $Q^{-1}$ by a polynomial transformation $P_{\mathrm{inv}}(x)\approx 1/x$, producing a block encoding $U_{Q^{-1}}$. 
This enables the construction of the Schur complement $S=\mathcal{A}Q^{-1}\mathcal{A}^T$ via 
$U_S \leftarrow U_{\mathcal{A}} U_{Q^{-1}} U_{\mathcal{A}}^T$, and the right-hand side 
$b=-r-\mathcal{A}Q^{-1}g$, prepared by successively applying $U_{Q^{-1}}$ to $U_g$, 
then $U_{\mathcal{A}}$, and finally combining with $U_r$ using the Linear Combination of Unitaries (LCU).  The system $S\lambda=b$ is then solved using a second QSVT inversion. 
Since $S=\mathcal{A}Q^{-1}\mathcal{A}^T$, its condition number can be bounded as 
$\kappa_S \leq \kappa_{\mathcal{A}}^2 \kappa_Q$, which suffices for selecting the polynomial degree in QSVT. 
The resulting block encoding $U_{S^{-1}}$ is applied to $U_b$ to obtain $U_\lambda \leftarrow U_{S^{-1}} U_b$, representing $\lambda=S^{-1}b$.  
Finally, the primal step $\Delta z = -Q^{-1}(g+\mathcal{A}^T\lambda)$ is assembled. 
The product $\mathcal{A}^T\lambda$ is formed by applying $U_{\mathcal{A}}^T$ to $U_\lambda$ and adding $U_g$ via LCU to produce $U_1$. 
Applying $U_{Q^{-1}}$ to $U_1$ yields $U_{\Delta z}$. 
Measurement of this encoding returns the classical vector $\Delta z$, which is passed to the SQP routine for line search and barrier updates. 
\begin{figure}[t]
    \centering
    \includegraphics[width=\linewidth]{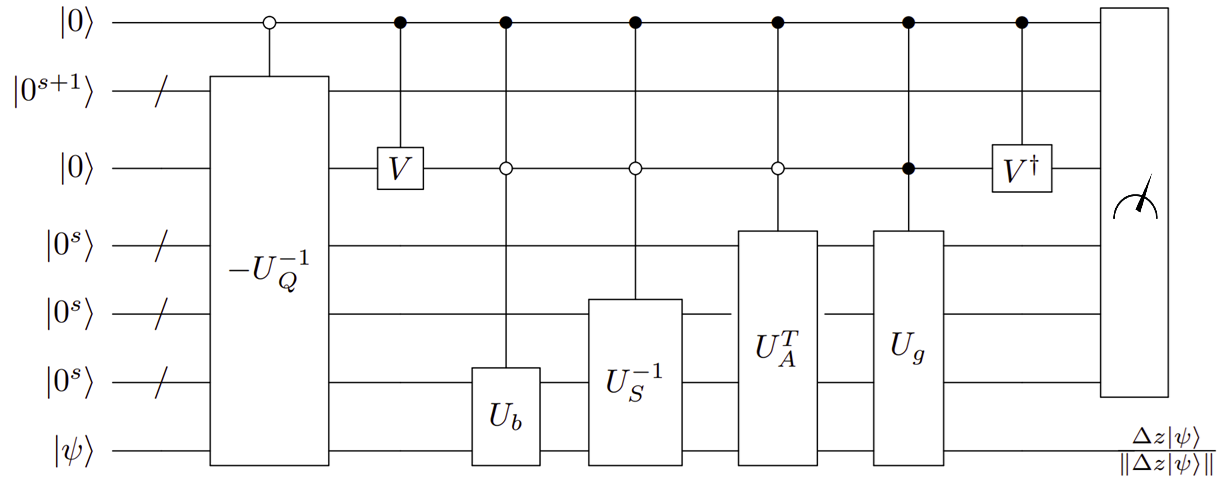}
    \caption{\small High level quantum circuit for calculating $\Delta z$. 
    %The registers on the left are initialized in zero states, except the bottom register which holds the input $\ket{\psi}$. The gates $V$ and $V^\dagger$ implement the LCU construction controlled by the ancilla qubits. The output register encodes the normalized solution state $\Delta z\ket{\psi}/\|\Delta z\ket{\psi}\|$, from which the classical vector $\Delta z$ is obtained after measurement.
    }
    \label{fig:quantum-schur-circuit}
\vspace{-8mm}
\end{figure}

\begin{proposition}[Normalization Factor]
\label{prop:success-schur}
In Algorithm~\ref{Qschur}, the block encodings propagate such that the final block encoding of the step direction $\Delta z$ has normalization factor
$\alpha_{\Delta z}
  \;=\;
  \tfrac{\kappa_Q \beta_Q}{\alpha_Q}
  \left(
    \alpha_g
    + \alpha_{\mathcal{A}} \cdot \tfrac{\kappa_S \beta_S}{\alpha_{\mathcal{A}}^2 } \tfrac{\alpha_Q}{\kappa_Q \beta_Q}
      \big(
        \alpha_r + \alpha_{\mathcal{A}} \tfrac{\kappa_Q \beta_Q}{\alpha_Q} \alpha_g
      \big)
  \right)$.
Here $\beta_Q$ and $\beta_S$ are 
%operator-specific 
scaling parameters chosen in the QSVT inversion of $Q$ and $S$, ensuring that the polynomial approximation of $1/x$ remains bounded on $[-1,1]$ while accurately approximating inversion on the spectrum of the operator.
\end{proposition}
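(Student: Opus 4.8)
The plan is to track the propagation of normalization factors through Algorithm~\ref{Qschur} by repeatedly invoking the three composition rules established earlier: the multiplication rule (Remark~\ref{rem:mul}), the additive LCU rule (Remark~\ref{rem:add}), and the QSVT inversion rule (Remark~\ref{rem:QSVT}). The argument is essentially structured bookkeeping: at each line of the subroutine I record how the normalization factor of the newly formed block encoding depends on those of its inputs, and then compose these relations to recover $\alpha_{\Delta z}$. The nested form of the claimed expression mirrors exactly the sequence of operations performed across the three steps of the algorithm, so the proof amounts to verifying that this correspondence holds term by term.

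First I would treat Step~1. By Remark~\ref{rem:QSVT}, the QSVT inversion of $Q$ produces $\alpha_{Q^{-1}} = \kappa_Q\beta_Q/\alpha_Q$. Forming $S=\mathcal{A}Q^{-1}\mathcal{A}^T$ by two applications of the multiplication rule—using that $U_{\mathcal{A}}^T$ carries the same normalization $\alpha_{\mathcal{A}}$ as $U_{\mathcal{A}}$—gives $\alpha_S = \alpha_{\mathcal{A}}^2\,\kappa_Q\beta_Q/\alpha_Q$. For the right-hand side $b=-r-\mathcal{A}Q^{-1}g$, the chained product $\mathcal{A}Q^{-1}g$ acquires normalization $\alpha_{\mathcal{A}}(\kappa_Q\beta_Q/\alpha_Q)\alpha_g$ by the multiplication rule, and combining with $-r$ through LCU yields $\alpha_b = \alpha_r + \alpha_{\mathcal{A}}(\kappa_Q\beta_Q/\alpha_Q)\alpha_g$.

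Next, for Step~2, the second QSVT inversion applied to $S$ gives $\alpha_{S^{-1}} = \kappa_S\beta_S/\alpha_S = (\kappa_S\beta_S/\alpha_{\mathcal{A}}^2)(\alpha_Q/\kappa_Q\beta_Q)$, and multiplying $U_{S^{-1}}$ by $U_b$ produces $\alpha_{\lambda}=\alpha_{S^{-1}}\alpha_b$. Finally, in Step~3, assembling $U_1 = g+\mathcal{A}^T\lambda$ via one multiplication and one LCU gives $\alpha_1 = \alpha_g + \alpha_{\mathcal{A}}\alpha_{\lambda}$, and the closing multiplication $\Delta z = -Q^{-1}U_1$ yields $\alpha_{\Delta z} = \alpha_{Q^{-1}}\alpha_1 = (\kappa_Q\beta_Q/\alpha_Q)\alpha_1$. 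Substituting the accumulated expressions for $\alpha_\lambda$ and $\alpha_b$ back into this relation reproduces the stated nested formula verbatim.

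Since each line invokes a rule whose normalization factor is already given explicitly in the corresponding remark, there is no genuine analytical obstacle here; the proof is a deterministic calculation rather than an argument with a hard core. The only care required is to respect the exact left-to-right order of the compositions so that the nesting of the final expression is matched, and to recall that transposition leaves the normalization factor invariant. I would also note in passing that these rules fix the normalization factor of \emph{this particular construction}, so the equality in the statement should be read as the value produced by Algorithm~\ref{Qschur}, not as an intrinsic minimal normalization for an encoding of $\Delta z$; the subexponential blow-up of $\alpha_{\Delta z}$ through the nested products is precisely what the later complexity analysis must absorb.
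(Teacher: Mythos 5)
Your proposal is correct and follows essentially the same argument as the paper: both trace the normalization factors step by step through Algorithm~\ref{Qschur} using the QSVT inversion, multiplication, and LCU rules, obtaining $\alpha_{Q^{-1}}=\kappa_Q\beta_Q/\alpha_Q$, $\alpha_S=\alpha_{\mathcal{A}}^2\alpha_{Q^{-1}}$, $\alpha_b=\alpha_r+\alpha_{\mathcal{A}}\alpha_{Q^{-1}}\alpha_g$, $\alpha_{S^{-1}}=\kappa_S\beta_S/\alpha_S$, $\alpha_\lambda=\alpha_{S^{-1}}\alpha_b$, $\alpha_1=\alpha_g+\alpha_{\mathcal{A}}\alpha_\lambda$, and finally $\alpha_{\Delta z}=\alpha_{Q^{-1}}\alpha_1$ before substituting back. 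The only difference is presentational: you cite the composition remarks explicitly at each step and add the (correct) caveat that the formula describes this particular construction rather than an intrinsic minimal normalization, which the paper leaves implicit.
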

\begin{proof}
Tracing through Algorithm~\ref{Qschur}, the inversion of $Q$ by QSVT produces a block encoding with normalization $\alpha_{Q^{-1}} = \kappa_Q \beta_Q/ \alpha_Q$. Using this in the construction of the Schur Complement yields $\alpha_{S} =\alpha_{\mathcal{A}}^2 \alpha_{Q^{-1}}$. The right-hand side vector $b$ then has normalization $\alpha_b = \alpha_r + \alpha_{\mathcal{A}} \alpha_{Q^{-1}} \alpha_g$. The Schur complement inversion contributes $\alpha_{S^{-1}} = \kappa_S \beta_S / \alpha_S$, so the multiplier state $\lambda$ has normalization $\alpha_\lambda = \alpha_{S^{-1}} \alpha_b$. Forming $U_1$ results in $\alpha_1 = \alpha_g + \alpha_{\mathcal{A}} \alpha_\lambda$, and finally applying $Q^{-1}$ gives $\alpha_{\Delta z} = \alpha_{Q^{-1}} \alpha_1$. Substituting the intermediate expressions produces the closed-form normalization factor $\alpha_{\Delta z}$.
%stated in the proposition. 
\end{proof}
%In a block encoding, an operator $A$ appears as $A/\alpha$ in the top-left block of a unitary, so the success probability of extracting $A\ket{\psi}$ after post-selection scales as $\|A\ket{\psi}\|^2/\alpha^2$. 
%\begin{remark}To boost the probability of success, the algorithm must amplify the output by a factor of\[\Theta\!\left(\frac{\alpha_{\Delta z}}{\|\Delta z\|}\right).\]\end{remark}
\begin{remark}[Success Probability]
The normalization factor $\alpha_{\Delta z}$ determines the success probability 
of the algorithm. 
The final measurement succeeds with probability 
$P_{\text{succ}} = \frac{\|\Delta z\|^2}{\alpha_{\Delta z}^2}$,
implying that on average one needs
$1/{P_{\text{succ}} }$
repetitions of the algorithm to obtain the correct result.
Alternatively, amplitude amplification techniques can be employed to quadratically improve the success probability, reducing the expected number of repetitions \cite{brassard2000quantum,guerreschi2019repeat}.
\end{remark}

\begin{proposition}[Accuracy] \label{prop:accuracy}
Consider Algorithm \ref{Qschur}, the $(\alpha_Q,a_Q,\varepsilon_Q)$-block encoding of $Q$, 
$(\alpha_\mathcal{A},a_\mathcal{A},\varepsilon_\mathcal{A})$-block encoding of $\mathcal{A}$, 
$(\alpha_g,a_g,\varepsilon_g)$-block encoding of $g$, 
and $(\alpha_r,a_r,\varepsilon_r)$-block encoding of $r$,
the QSVT-based implementations of $Q^{-1}$ and $S^{-1}$ using polynomial approximations of accuracy 
$\varepsilon_{Q}^\prime$ and $\varepsilon_{S}^\prime$.
Then, there exist constants $C_1,\dots,C_6\ge0$, depending on the   normalization factors
such that the accuracy error  $\varepsilon_{\Delta z}$ satisfies the additive error bound
\begin{equation}\label{eq:final-additive}
\varepsilon_{\Delta z}
\;\le\;
C_1\,\varepsilon_Q 
+ C_2\,\varepsilon_\mathcal{A} 
+ C_3\,\varepsilon_g 
+ C_4\,\varepsilon_r 
+ C_5\,\varepsilon_{Q}^\prime
+ C_6\,\varepsilon_{S}^\prime .
\end{equation}
\end{proposition}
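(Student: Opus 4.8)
The plan is to trace Algorithm~\ref{Qschur} line by line and propagate the error of each intermediate block encoding using only the three composition rules already established: the additive rule for the LCU (Remark~\ref{rem:add}), which contributes an \emph{unweighted} sum of the input errors, the multiplicative rule (Remark~\ref{rem:mul}), which weights input errors by the corresponding normalization factors, and the QSVT inversion rule (Remark~\ref{rem:QSVT}), which returns an error proportional to the input error plus the polynomial term. Since every step of the algorithm is either a product, an LCU sum, or a QSVT inversion, and each rule produces an output error that is an affine function of its input errors with coefficients given by the normalization factors of Proposition~\ref{prop:success-schur}, the composite error is a linear combination of the base quantities $\varepsilon_Q,\varepsilon_\mathcal{A},\varepsilon_g,\varepsilon_r,\varepsilon_Q',\varepsilon_S'$, whose coefficients are products and sums of the $\alpha_\bullet,\kappa_\bullet,\beta_\bullet$. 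Collecting those coefficients into $C_1,\dots,C_6$ yields~\eqref{eq:final-additive}.

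Concretely, I would first bound the error $\varepsilon_{Q^{-1}}$ of $U_{Q^{-1}}$ by Remark~\ref{rem:QSVT}, giving $\varepsilon_{Q^{-1}}\le c_Q\varepsilon_Q+\varepsilon_Q'$ for a constant $c_Q$ depending on $\alpha_Q,\kappa_Q,\beta_Q$. Applying the multiplication rule twice to $U_S=U_\mathcal{A}U_{Q^{-1}}U_\mathcal{A}^T$, and once together with an LCU to $U_b=-U_r-U_\mathcal{A}U_{Q^{-1}}U_g$, expresses $\varepsilon_S$ and $\varepsilon_b$ as affine functions of $\varepsilon_\mathcal{A},\varepsilon_{Q^{-1}},\varepsilon_g,\varepsilon_r$ with weights $\alpha_\mathcal{A},\alpha_{Q^{-1}},\alpha_g,\alpha_r$. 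Feeding $\varepsilon_S$ into Remark~\ref{rem:QSVT} a second time gives $\varepsilon_{S^{-1}}\le c_S\varepsilon_S+\varepsilon_S'$, where the sensitivity constant $c_S$ is controlled by the bound $\kappa_S\le\kappa_\mathcal{A}^2\kappa_Q$ already noted for the Schur complement. The remaining steps—$U_\lambda=U_{S^{-1}}U_b$, then $U_1=U_g+U_\mathcal{A}^TU_\lambda$, and finally $U_{\Delta z}=-U_{Q^{-1}}U_1$—are again handled by the multiplication and addition rules, so that $\varepsilon_{\Delta z}$ is obtained by back-substituting each intermediate bound into the next.

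The main obstacle is the nested inversion of the Schur complement: because $\varepsilon_S$ is itself a compound quantity built from $\varepsilon_Q,\varepsilon_\mathcal{A},\varepsilon_Q'$, a naive propagation through the second QSVT inversion and the subsequent products generates cross terms such as $\varepsilon_Q\varepsilon_\mathcal{A}$ and $\varepsilon_\mathcal{A}\varepsilon_S'$. To keep the final estimate additive rather than multiplicative, I would perform a first-order (linearized) error analysis, retaining only terms linear in the base errors and discarding higher-order products, which are negligible in the small-error regime in which the block encodings are prepared. This is justified because each normalization factor is a fixed, $\varepsilon$-independent quantity, so the coefficient of every linear term is uniformly bounded, while the discarded products are of order $O(\varepsilon^2)$ and can be absorbed into the $C_i$ up to a harmless constant. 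A secondary technical point is to verify that the inversion-sensitivity constants $c_Q,c_S$ are finite, which follows from the spectral assumption $\sigma_i\in[\kappa^{-1},1]$ in Remark~\ref{rem:QSVT} together with the condition-number bound $\kappa_S\le\kappa_\mathcal{A}^2\kappa_Q$.
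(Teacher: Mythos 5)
Your proposal is correct and follows essentially the same route as the paper's proof: trace Algorithm~2 step by step, bound the error of each intermediate encoding ($U_{Q^{-1}}$, $U_S$, $U_b$, $U_{S^{-1}}$, $U_\lambda$, $U_1$, $U_{\Delta z}$) via the QSVT-inversion, multiplication, and LCU rules, and back-substitute to collect the constants $C_1,\dots,C_6$. One correction, however: the ``main obstacle'' you identify is not actually there. Under the composition rules as stated, every propagation step is \emph{affine} in its input errors with coefficients given by normalization factors (and condition-number bounds), which are error-independent constants; hence terms such as $\varepsilon_Q\varepsilon_{\mathcal{A}}$ or $\varepsilon_{\mathcal{A}}\varepsilon_S^\prime$ never arise, and no first-order truncation or ``absorption of $O(\varepsilon^2)$ terms'' is needed --- the paper's proof simply chains the affine bounds exactly. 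Your linearization step is therefore vacuous rather than wrong; if cross terms did appear (as in sharper product-rule variants), discarding them would additionally require an a priori bound on the errors to fold them into the $C_i$, which you should then state explicitly.
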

\begin{proof}
Tracing the error propagation through the main steps and using Remarks  
\ref{rem:add}-\ref{rem:QSVT} one can conclude that the error in $U_{Q^{-1}}$ satisfies
$\varepsilon_{Q^{-1}} \le C \varepsilon_Q + \varepsilon_{Q}^\prime$, for some $C>0$.
From the Schur complement $S$, multiplying $U_\mathcal{A}$, $U_{Q^{-1}}$, and $U_\mathcal{A}^T$ propagates errors $\varepsilon_S \le C_1 \varepsilon_\mathcal{A} + C_2 \varepsilon_{Q^{-1}}$ for some $C_1, C_2>0$. From the right-hand side $b$, $\varepsilon_b \le \varepsilon_r+C_1 \varepsilon_\mathcal{A} + C_2 \varepsilon_{Q^{-1}}+C_3\varepsilon_g$
for some $C_1, C_2, C_3>0$. The error in $U_{S^{-1}}$ satisfies
$\varepsilon_{S^{-1}} \le C \varepsilon_S + \varepsilon_{S}^\prime$. From computing $\lambda$, the error is $\varepsilon_\lambda \le \alpha_{S^{-1}} \varepsilon_{S^{-1}}+\alpha_b \varepsilon_b$. Computing $U_1$ gives the error $\varepsilon_1 \le \varepsilon_g+ \alpha_\mathcal{A}\varepsilon_\mathcal{A}+\alpha_\lambda\varepsilon_\lambda$. To compute $\Delta z$, the error satisfies 
$\varepsilon_{\Delta z} \le \alpha_{Q^{-1}} \varepsilon_{Q^{-1}}+\alpha_1 \varepsilon_1 $.
Collecting all terms, the total error is \eqref{eq:final-additive}
for suitable constants $C_1,\dots,C_6 \ge 0$ as claimed.
\end{proof}

\begin{proposition}[Complexity of the Quantum Schur Step]
\label{prop:schur-complexity}
Let $\varepsilon$ be a bound to all  block encoding tolerance errors, and
$\varepsilon^\prime$ the bound for the  polynomial approximations used in the QSVT.  
Assume that all matrices involved admit efficient block encodings using quantum-accessible
data structures, and that the conditions numbers $\kappa_Q$ and $\kappa_S$ of $Q$ and $S$ are bounded.
Algorithm \ref{Qschur} computes the primal update 
$\Delta z$ with a total time complexity
$\mathcal{O}\!\Big( \mathrm{polylog}(n_z/\varepsilon) \, \kappa\, \log(1/\varepsilon^\prime) \Big)$,
with $\kappa := \max\{\kappa_Q,\kappa_S\}$, %and 
$n_z$ 
%is 
the dimension of the decision vector.
%is the condition number of $Q$, and $\kappa_S \leq \kappa(\mathcal{A})^2 \kappa(Q)$ is an upper bound on the condition number of the Schur complement $S = \mathcal{A} Q^{-1} \mathcal{A}^T \in \mathbb{R}^{(N+1)n \times (N+1)n}$. Here $n_z = N(n+m)+n$ denotes the dimension of the decision vector. Let $n_z$ be the dimension of the decision vector, and let $\kappa_Q$ and $\kappa_S$ denote the condition numbers of $Q$ and $S$, respectively. Then Algorithm~\ref{Qschur} computes the SQP step direction $\Delta z$with runtime$\mathcal{O}\!\Big( \mathrm{polylog}(n_z/\varepsilon) \,\big(\kappa_Q + \kappa_S\big)\, \log(1/\varepsilon^\prime) \Big)$.
\end{proposition}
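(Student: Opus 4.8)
The plan is to bound the total gate depth as the product of two factors: the number of queries Algorithm~\ref{Qschur} makes to the base block-encoding oracles of $(Q,\mathcal{A},g,r)$, and the cost of a single such query. Under the standing assumption that all problem matrices admit efficient block encodings via quantum-accessible data structures, each base oracle $U_Q, U_{\mathcal{A}}, U_g, U_r$ is implementable in depth $\mathcal{O}(\mathrm{polylog}(n_z/\varepsilon))$, where the $n_z$ dependence reflects the number of qubits $s=\lceil \log_2 n_z\rceil$ and the $1/\varepsilon$ dependence reflects the precision to which the (padded) classical data must be synthesized to meet the block-encoding tolerance $\varepsilon$. This fixes the per-query cost; it remains to count the queries.

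First I would account for the two QSVT inversions, which dominate. By Remark~\ref{rem:QSVT} and the construction of Gilyén et al.~\cite{gilyen2019quantum}, approximating $1/x$ to accuracy $\varepsilon'$ on $[\kappa^{-1},1]$ requires an odd polynomial of degree $d=\mathcal{O}(\kappa \log(1/\varepsilon'))$, and implementing the corresponding singular-value transformation costs $d$ applications of the block encoding of the inverted matrix (and its adjoint) together with $\mathcal{O}(1)$ single-qubit phase rotations per application. Hence $U_{Q^{-1}}$ uses $\mathcal{O}(\kappa_Q \log(1/\varepsilon'))$ calls to $U_Q$, and $U_{S^{-1}}$ uses $\mathcal{O}(\kappa_S \log(1/\varepsilon'))$ calls to $U_S$, where the bound $\kappa_S \le \kappa_{\mathcal{A}}^2 \kappa_Q$ noted in the text certifies that $\kappa_S$ is finite and supplies the QSVT degree for the second inversion.

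Next I would handle the remaining primitives. Forming $S=\mathcal{A}Q^{-1}\mathcal{A}^T$, $b$, $\lambda$, $U_1$, and finally $U_{\Delta z}$ uses only a fixed, problem-size-independent number of block-encoding multiplications (Remark~\ref{rem:mul}) and LCU additions (Remark~\ref{rem:add}). Each such primitive composes its input circuits while introducing at most a constant number of extra ancillas and controlled operations, so it contributes only a constant multiplicative overhead to the depth and does not alter the asymptotics. Tracing through Steps~1--3 of Algorithm~\ref{Qschur}, the query count to the base oracles is therefore governed entirely by the two QSVT degrees, and collecting the $\mathrm{polylog}$ per-query cost, the constant arithmetic overhead, and the (constant, under bounded $\kappa$ and the normalization factors of Proposition~\ref{prop:success-schur}) amplitude-amplification overhead yields the claimed $\mathcal{O}\big(\mathrm{polylog}(n_z/\varepsilon)\,\kappa\,\log(1/\varepsilon')\big)$ with $\kappa=\max\{\kappa_Q,\kappa_S\}$.

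The main obstacle is the nested inversion: the QSVT circuit for $S^{-1}$ queries $U_S$, and each such query in turn invokes the QSVT circuit $U_{Q^{-1}}$. Expanded literally, this composition is multiplicative, giving a factor $\mathcal{O}(\kappa_Q\kappa_S \log^2(1/\varepsilon'))$ rather than the linear-in-$\kappa$ factor asserted. To obtain the stated bound I would fix the cost model explicitly at the outset---counting depth in calls to the four base oracles, with every composite encoding ($U_{Q^{-1}}$ included) abstracted as a single efficient oracle of $\mathrm{polylog}$ cost in the sense of the standing ``efficient block encoding'' assumption---so that the two inversions enter additively as $\mathcal{O}((\kappa_Q+\kappa_S)\log(1/\varepsilon'))=\mathcal{O}(\kappa\log(1/\varepsilon'))$. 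Making this abstraction precise, and verifying that absorbing $U_{Q^{-1}}$ into an oracle preserves the polylog per-query cost, is the step I expect to require the most care; the alternative, fully honest route is to retain the cross term $\kappa_Q\kappa_S$ and observe that, since both condition numbers are assumed bounded, it is dominated by $\kappa^2$ and folded into the asymptotic prefactor.
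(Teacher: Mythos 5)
Your proposal follows the same decomposition as the paper's own proof: $\mathcal{O}(\mathrm{polylog}(n_z/\varepsilon))$ preparation of the base encodings of $(Q,\mathcal{A},g,r)$, two QSVT inversions with polynomial degree $\mathcal{O}(\kappa\log(1/\varepsilon'))$ dominating the runtime, and only constant overhead from the multiplication and LCU steps of Remarks~\ref{rem:add}--\ref{rem:mul}. Where you go beyond the paper is the nesting issue, and you are right to flag it: the paper prices the inversion of $S$ at $\mathcal{O}(\mathrm{polylog}((N+1)n/\varepsilon)\,\kappa_S\log(1/\varepsilon'))$ and then simply adds the two inversion costs, implicitly treating $U_S$ as a unit-cost oracle even though $U_S = U_{\mathcal{A}}\,U_{Q^{-1}}\,U_{\mathcal{A}}^T$ contains the full degree-$\mathcal{O}(\kappa_Q\log(1/\varepsilon'))$ QSVT circuit for $Q^{-1}$ inside every query. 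As you observe, the literal gate count of Algorithm~\ref{Qschur} therefore carries the product $\kappa_Q\kappa_S\log^2(1/\varepsilon')$, and the stated linear-in-$\kappa$ bound survives only because the proposition assumes $\kappa_Q$ and $\kappa_S$ bounded, so any fixed power of $\kappa$ is absorbed into the constant. Your fallback resolution (retain the cross term and invoke boundedness) is the honest repair and reflects what the claim actually supports; your preferred resolution (a cost model in which composite encodings count as unit-cost oracles) is not justified by the paper's hypotheses, since the assumption of efficient block encodings via quantum-accessible data structures applies to the input data, not to $U_{Q^{-1}}$, whose depth genuinely grows with $\kappa_Q$. One further point both treatments leave implicit: the last line of Algorithm~\ref{Qschur} recovers the classical vector $\Delta z$ by measurement, and reading out all $n_z$ entries cannot itself be polylogarithmic in $n_z$; the bound is best read as the cost of preparing the state encoding $\Delta z$, with readout repetitions and the success probability of Proposition~\ref{prop:success-schur} accounted separately, as you briefly note and the paper's proof does not.
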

\begin{proof}
Preparing block encodings of $Q, \mathcal{A}, g, r$ requires 
$\mathcal{O}(\mathrm{polylog}(n_z/\varepsilon))$ time \cite{gilyen2019quantum, lin2022lecture}. 
The inversion of $Q \in \mathbb{R}^{n_z \times n_z}$ via QSVT requires a polynomial of degree 
$\mathcal{O}(\kappa_Q \log(1/\varepsilon^\prime))$, resulting in a complexity of $\mathcal{O}(\mathrm{polylog}(n_z/\varepsilon)\kappa_Q \log(1/\varepsilon^\prime))$.
Forming the Schur complement 
$S\in \mathbb{R}^{(N+1)n \times (N+1)n}$ 
and the right-hand side $b$ uses block-encoding multiplications and additions, which contribute only polylogarithmic overhead. 
The inversion of $S$ via QSVT requires 
$\mathcal{O}(\mathrm{polylog}((N+1)n/\varepsilon)\kappa_S \log(1/\varepsilon^\prime))$, 
where $\kappa_S \leq \kappa(\mathcal{A})^2 \kappa(Q)$. 
The primal recovery step $\Delta z$ 
requires one further application of $Q^{-1}$ at the same cost as before. Thus, the runtime is dominated by the two QSVT inversions, leading to the mentioned result.
\end{proof}

%The table \ref{normalization} tracks the propagation of normalization factors through Algorithm~\ref{Qschur}. Each quantum operation consumes input normalization factors and outputs a new one. This chain determines the final amplitude of the state $U_{\Delta z}$, directly impacting measurement efficiency. The key scaling emerges from the QSVT-Invert steps, where output normalization factors $\alpha_{Q^{-1}}$ and $\alpha_{S^{-1}}$ grow linearly with the condition numbers $\kappa_Q$ and $\kappa_S$, respectively. This propagation must be managed to ensure the final result is measurable with high probability.

\vspace{-4mm}

\section{Convergence and Complexity Analysis}
We now present an analysis of the convergence behavior and computational complexity of the proposed hybrid quantum–classical SQP algorithm, establishing both its robustness to perturbations and its potential for quantum speedup. To this end, we start by formalizing the algorithm as an iterative dynamical system and introducing key definitions for barrier-dependent and classical stationary points.

\begin{definition}[Algorithm \ref{alg:hybrid} as a Dynamical System]
Let \( z_i \in \mathbb{R}^{n_z} \) be the optimization iterate at iteration \( i \), \( \mu_i \in \mathbb{R}_{>0} \) the barrier parameter,% and 
\( \varepsilon_{\Delta z} \in \mathbb{R}_{>0} \) the quantum solver error. Algorithm \ref{alg:hybrid} defines a iterative nonlinear dynamical system
\begin{equation} \label{eq:dynNLP}
z_{i+1} = \Phi(z_i, \mu_i, \varepsilon_{\Delta z}),   
\end{equation}
where \( \Phi \) is the update rule determined by Linearization \& Quadratic Approximation, Quantum Schur step, and Backtracking line search.
\end{definition}

\begin{definition}[Barrier-Dependent Stationary Point]
Let \( \mu_i > 0 \) be the barrier parameter at iteration \( i \). A point \( \bar{z}_i \in \mathbb{R}^{n_z} \) is a stationary point\footnote{Note that a stationary (KKT) point satisfies the first-order necessary conditions for optimality, but this alone does not ensure it is a local minimum; additional second-order sufficient conditions are required for local minimality such as the conditions stated in Assumption \ref{ass:regularity}.} of the barrier subproblem at \( \mu_i \) if it satisfies
$\nabla \tilde{F}( \bar{z}_i ; \mu_i ) + \nabla G( \bar{z}_i )^T \lambda_i = 0$,
subject to
$G( \bar{z}_i ) = 0, \quad H( \bar{z}_i ) < 0$,
where \( \tilde{F}(z; \mu) = F(z) + \mu \sum_{j=1}^{q} \phi\!\big( H_j(z) \big) \) is the barrier-augmented objective.
\end{definition}

\begin{definition}[Stationary Point of the Original Problem]
A point \( z^\star \in \mathbb{R}^{n_z} \) (possibly non unique) is called a \emph{stationary point} of the original constrained optimization problem if there exist multipliers \( \lambda^\star \in \mathbb{R}^{m_{\mathrm{eq}}} \) and \( \nu^\star \in \mathbb{R}^{m_{\mathrm{in}}} \) with \( \nu^\star \ge 0 \) such that
$\nabla F( z^\star ) + \nabla G( z^\star )^T \lambda^\star + \nabla H( z^\star )^T \nu^\star = 0$,
subject to
$G( z^\star ) = 0, \quad H( z^\star ) \le 0, \quad \nu^\star_j H_j( z^\star ) = 0 \quad \forall j$.
\end{definition}

Given the formal definition of Algorithm \ref{alg:hybrid} as a iterative nonlinear system \eqref{eq:dynNLP},
we are interested in the stability and robustness properties of the generated sequence \( \{z_i\} \).
Particularly, we analyze how the iterates behave under small perturbations in \( \mu_i \) and \( \varepsilon_{\Delta z} \), and how their limiting behavior depends on the magnitude of these perturbations. %To formalize this, 
To this end, we use comparison functions from classes \( \mathcal{K} \) and \( \mathcal{KL} \) to characterize input-to-state stability (ISS) of the system \cite{jiang2001input}.

\begin{definition}[Local Input-to-State Stability]
The system \eqref{eq:dynNLP}
is said to be \emph{locally input-to-state stable (ISS)} around a stationary point \( z^\star \) if there exist \( \varepsilon, \delta > 0 \) and functions \( \beta \in \mathcal{KL} \), \( \gamma \in \mathcal{K}_\infty \) such that, for any sequence of perturbations with \( \|(\mu, \varepsilon_{\Delta z})\|_\infty := \max\{\sup_{i\in \mathbb{N}} \|\mu_i\|,\varepsilon_{\Delta z}\} < \delta \), the iterates satisfy
$\|z_i - z^\star\| \leq \beta(\|z_0 - z^\star\|, i) + \gamma(\|(\mu, \varepsilon_{\Delta z})\|_\infty)$,
for all \( i \in \mathbb{N} \), provided \( \|z_0 - z^\star\| < \varepsilon \).
\end{definition}

This property implies that the iterates \( z_i \) converge to a neighborhood of \( z^\star \), with the radius determined by the gain function \( \gamma \) and the size of the perturbations. If the perturbations vanish, the iterates converge to the stationary point \( z^\star \) of the original problem.
Furthermore, a sufficient condition for local ISS is the existence of a continuous, positive definite function \( V : \mathbb{R}^{n_z} \to \mathbb{R}_{\geq 0} \), constants \( \varepsilon > 0 \), \( \delta > 0 \), and functions \( \alpha, \gamma \in \mathcal{K}_\infty \) with \( \alpha < \mathrm{id} \), such that
$V(\Phi(z, \mu, \varepsilon_{\Delta z})) \leq \alpha(V(z)) + \gamma(\|(\mu, \varepsilon_{\Delta z})\|_\infty)$,
for all \( z \) with \( \|z - z^\star\| < \varepsilon \) and %all 
perturbations \( (\mu, \varepsilon_{\Delta z}) \) with \( \|(\mu, \varepsilon_{\Delta z})\|_\infty < \delta \).

We now state the main assumptions regarding problem regularity, properties of the barrier function, and the accuracy of the quantum Schur step, which underpin the subsequent convergence analysis.
\begin{assumption}[Problem Regularity]\label{ass:regularity}
Let $F:\mathbb{R}^{n_z} \to \mathbb{R}$, $G:\mathbb{R}^{n_z} \to \mathbb{R}^{m_{\mathrm{eq}}}$, and $H:\mathbb{R}^{n_z} \to \mathbb{R}^{m_{\mathrm{in}}}$ %, where $n_z = N(n+m) + n$, $m_{\mathrm{eq}} = (N+1)n$, and $m_{\mathrm{in}} = Np + p_N$, 
be twice continuously differentiable in a neighborhood of a stationary point $z^\star$ of the nonlinear program \eqref{eq:NLP}.
% \[
% \min_{z}\;F(z)\quad\text{s.t.}\quad G(z)=0,\;\;H(z)\le 0.
% \]
At $z^\star$, the Linear Independence Constraint Qualification (LICQ) holds; strict complementarity holds for the active inequalities; and the Second-Order Sufficient Condition (SOSC) holds.
%, meaning the Hessian of the Lagrangian is positive definite on the tangent space to the constraints at $z^\star$.
Moreover, the QP
Hessian is the Gauss–Newton approximation possibly with damping
$Q \gets \nabla^2 F\!\left(z^{(i)}\right)
   + \mu \sum_{j=1}^{q} \phi''\!\big(H_j(z^{(i)})\big)\,\nabla H_j\!\left(z^{(i)}\right)\nabla H_j\!\left(z^{(i)}\right)^T + \sigma I$,
$\sigma\ge 0$, chosen so that $Q\succ 0$ on $\ker(\mathcal A)$.
\end{assumption}

\begin{assumption}[Barrier Function Properties]\label{ass:barrier}
The barrier function $\phi : (-\infty, 0) \to \mathbb{R}$ is strictly convex and twice continuously differentiable on its domain. Moreover, it satisfies $\lim_{s \to 0^-} \phi(s) = +\infty$, with $\phi'(s) < 0$ and $\phi''(s) > 0$ for all $s < 0$. An
%typical 
example is the logarithmic barrier $\phi(s) = -\log(-s)$.
\end{assumption}

\begin{assumption}[Quantum Schur Step Accuracy]\label{ass:quantum}
At iteration $i$, the quantum subroutine
\[
\Delta z^{(i)} \;\gets\; \texttt{QuantumSchurStep}\!\big(Q^{(i)},\,\mathcal{A}^{(i)},\,g^{(i)},\,r^{(i)}\big)
\]
returns multipliers $\lambda^{(i)} \in \mathbb{R}^{m_{\mathrm{eq}}}$ and a primal step $\Delta z^{(i)} \in \mathbb{R}^{n_z}$ such that
$\|\Delta z^{(i)} - \Delta z^{(i)}_{\mathrm{Newton}}\| \;\le\; \varepsilon_{\Delta z}$,
where $\Delta z^{(i)}_{\mathrm{Newton}}$ is the exact\footnote{The ``exact Newton/SQP step'' refers to the solution $\Delta z^{(i)}_{\text{Newton}}$ of the quadratic subproblem arising from the second-order Taylor expansion of the barrier-augmented objective and first-order linearization of the constraints at the current iterate.} Newton/SQP step for the barrier subproblem at $z^{(i)}$, and $Q^{(i)} \in \mathbb{R}^{n_z \times n_z}$, $\mathcal{A}^{(i)} \in \mathbb{R}^{m_{\mathrm{eq}} \times n_z}$, $g^{(i)} \in \mathbb{R}^{n_z}$, $r^{(i)} \in \mathbb{R}^{m_{\mathrm{eq}}}$.
\end{assumption}
Note that Proposition \ref{prop:accuracy} guarantees that 
Assumption \ref{ass:quantum} holds.
The following useful lemma quantifies how closely the stationary point of the barrier subproblem approximates the true solution of the original constrained optimization problem as the barrier parameter decreases.
\begin{lemma}[Local Barrier Approximation Error]\label{lem:barrier-error}
Let $z^\star \in \mathbb{R}^{n_z}$ be a  stationary point of the original constrained optimization problem \eqref{eq:NLP}.  
Let $\bar{z}_i$ be a stationary point of the barrier subproblem at barrier parameter $\mu_i > 0$, 
$\min_{z} \Big[ F(z) + \mu_i \sum_{j=1}^q \phi\big( H_j(z) \big) \Big]
\quad \text{s.t. } G(z) = 0$,
where $\phi$ satisfies Assumption~\ref{ass:barrier}.  
Under Assumption~\ref{ass:regularity}, there exists $\bar{\mu} > 0$ such that for all $0 < \mu_i \le \bar{\mu}$ we have
$\|\bar{z}_i - z^\star\| \le \gamma(|\mu_i|)$,
for some class-$\mathcal{K}$ function $\gamma$, which can in fact be chosen in the linear form $\gamma(r) = C r$ for a constant $C > 0$ and all $0 < r \le \bar{\mu}$. Moreover, $\bar{z}_i \to z^\star$ as $\mu_i \to 0$.
\end{lemma}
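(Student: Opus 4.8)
The plan is to prove the estimate by a standard sensitivity (implicit function theorem) argument applied to the primal--dual central path, treating the barrier parameter $\mu$ as a regular perturbation of the original KKT system. First I would write down the first-order conditions of both problems in primal--dual form. For the original NLP, the stationary point $z^\star$ together with multipliers $(\lambda^\star,\nu^\star)$ satisfies Lagrangian stationarity, $G(z^\star)=0$, $H(z^\star)\le 0$, $\nu^\star\ge 0$, and complementary slackness $\nu_j^\star H_j(z^\star)=0$. For the barrier subproblem I would differentiate $\tilde F(\cdot;\mu_i)$ and identify the induced barrier multipliers $\nu_j^{(i)} := \mu_i\,\phi'\!\big(H_j(\bar z_i)\big)$ (for the logarithmic barrier, $\nu_j^{(i)} = -\mu_i/H_j(\bar z_i)>0$ by Assumption~\ref{ass:barrier}). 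Introducing slacks $s_j := -H_j(z)>0$, the barrier stationarity conditions then read exactly as the perturbed KKT system $\nabla_z L = 0$, $G(z)=0$, $H(z)+s=0$, and $s_j\nu_j=\mu_i$, where $L$ is the Lagrangian; at $\mu_i=0$ this collapses to the KKT system of the original problem and is solved by $w^\star:=(z^\star,\lambda^\star,\nu^\star,s^\star)$ with $s^\star=-H(z^\star)$.

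Next I would package these equations as $\mathcal F(w,\mu)=0$ with $w=(z,\lambda,\nu,s)$, observe $\mathcal F(w^\star,0)=0$, and compute the Jacobian $\partial_w\mathcal F(w^\star,0)$. Because $F,G,H$ are $C^2$ (Assumption~\ref{ass:regularity}), $\mathcal F$ is $C^1$ and $\mu$ enters smoothly, only through the affine term $s_j\nu_j-\mu$, so $\mu$ is a bona fide perturbation parameter. The crux is to show that $\partial_w\mathcal F(w^\star,0)$ is nonsingular, and this is exactly where the three regularity hypotheses enter: LICQ guarantees full row rank of the active constraint gradients and uniqueness of $(\lambda^\star,\nu^\star)$; strict complementarity ensures that for every index $j$ precisely one of $s_j^\star,\nu_j^\star$ is strictly positive, so the complementarity block $[\,S^\star\ \ N^\star\,]$ (with $S^\star=\mathrm{diag}(s^\star)$, $N^\star=\mathrm{diag}(\nu^\star)$) has a nonzero entry in each row and never degenerates; and SOSC makes $\nabla^2_{zz}L$ positive definite on the tangent subspace $\{d:\nabla G(z^\star)\,d=0,\ \nabla H_j(z^\star)\,d=0\ \forall j\in\mathcal A\}$ of active constraints, which under strict complementarity coincides with the critical cone. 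Combining these in the standard null-space/reduced-Hessian factorization of the KKT matrix yields invertibility of the full Jacobian.

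With nonsingularity established, the implicit function theorem provides $\bar\mu>0$, a neighborhood of $w^\star$, and a unique $C^1$ solution branch $\mu\mapsto w(\mu)$ on $[0,\bar\mu]$ with $w(0)=w^\star$; strict feasibility $s(\mu)>0$ persists by continuity for small $\mu$, so $w(\mu)$ is indeed a barrier stationary point, and by local uniqueness its $z$-component coincides with the given $\bar z_i$. The $C^1$ branch is locally Lipschitz, $\|w(\mu)-w^\star\|\le C|\mu|$ with $C=\sup_{[0,\bar\mu]}\|w'(\mu)\|<\infty$ and $w'(0)=-\big(\partial_w\mathcal F(w^\star,0)\big)^{-1}\partial_\mu\mathcal F(w^\star,0)$; projecting onto the $z$-coordinates gives $\|\bar z_i-z^\star\|\le C\mu_i$, i.e.\ the linear class-$\mathcal K$ bound $\gamma(r)=Cr$. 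Finally, $\bar z_i\to z^\star$ as $\mu_i\to 0$ follows from continuity of the branch at $\mu=0$.

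I expect the main obstacle to be the nonsingularity of the primal--dual Jacobian at $\mu=0$: the complementarity block becomes singular if strict complementarity fails (an active constraint with zero multiplier), and the Hessian block is uncontrolled without SOSC, so establishing that LICQ, strict complementarity, and SOSC together force invertibility — while verifying that the branch remains in the strictly feasible interior so the log-barrier quantities stay well defined along the path — is the technical heart of the argument. Everything else is a direct application of the implicit function theorem and extraction of the Lipschitz rate.
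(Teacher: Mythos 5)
Your proof is correct and follows the same overall strategy as the paper: treat $\mu$ as a perturbation parameter in the first-order conditions, invoke the implicit function theorem with LICQ, strict complementarity, and SOSC supplying nonsingularity of the Jacobian, and read off the linear rate from the resulting $C^1$ solution branch $z(\mu)$ with $z(0)=z^\star$. The difference lies in which system the IFT is applied to, and it matters. The paper's proof (a sketch deferring to the cited textbook) defines $\Psi(z,\lambda,\mu)$ directly from the primal barrier stationarity condition $\nabla F(z) + \mu\sum_j \phi'(H_j(z))\nabla H_j(z) + \nabla G(z)^T\lambda = 0$; taken literally, this mapping is not even well defined at $(z^\star,\lambda^\star,0)$ when some inequality constraint is active at $z^\star$, since $\phi'(H_j(z))$ blows up as $H_j \to 0^-$, and formally setting $\mu=0$ drops the barrier term entirely, so $\Psi(z^\star,\lambda^\star,0)=0$ does \emph{not} recover the original KKT stationarity (the contribution $\nabla H(z^\star)^T\nu^\star$, which is nonzero under strict complementarity, is lost). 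Your primal--dual--slack formulation---introducing $\nu_j = \mu\,\phi'(H_j)$, $s_j=-H_j$, and replacing the barrier condition by the smooth perturbed complementarity $s_j\nu_j=\mu$---is exactly the device that makes the IFT argument rigorous in the presence of active constraints, and it is in fact what the reference the paper cites actually does; in this sense your write-up is more careful than the paper's own sketch. Two caveats you should note: (i) the reformulation $s_j\nu_j=\mu$ is specific to the logarithmic barrier; for a general $\phi$ satisfying Assumption~\ref{ass:barrier} the perturbed complementarity $\nu_j=\mu\,\phi'(-s_j)$ is again singular at $s_j=0$ (e.g.\ the inverse barrier gives $s_j^2\nu_j=\mu$, whose Jacobian degenerates and whose path is not $O(\mu)$), so the linear rate is genuinely a log-barrier result---a limitation shared by the lemma as stated in the paper; (ii) your positivity claim $\nu_j=-\mu/H_j>0$ uses $\phi'(s)=-1/s>0$, which is the correct derivative of the log barrier but contradicts the sign convention $\phi'(s)<0$ in Assumption~\ref{ass:barrier} (which is internally inconsistent with $\lim_{s\to 0^-}\phi(s)=+\infty$); the error is in the paper's assumption, not in your computation.
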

\begin{proof}
The full proof can be found in \cite[Section~11.3]{nocedal2006numerical}.  
It mainly consists in analyzing the KKT conditions of the barrier subproblem:
$\nabla F(z) + \mu_i \sum_{j=1}^q \phi'(H_j(z)) \nabla H_j(z) + \nabla G(z)^T \lambda = 0$,  $G(z)=0$.
Define the mapping $\Psi(z,\lambda,\mu_i)$ that collects these conditions.  
At $\mu_i=0$, one recovers $\Psi(z^\star,\lambda^\star,0)=0$, which corresponds to the original KKT system.  
Under Assumption~\ref{ass:regularity}, the Jacobian of $\Psi$ with respect to $(z,\lambda)$ at $(z^\star,\lambda^\star,0)$ is nonsingular.  
The Implicit Function Theorem then guarantees the existence of a smooth local solution $(z(\mu_i),\lambda(\mu_i))$ for $\mu_i$ near $0$, with $z(0)=z^\star$.  
Differentiability further implies that there exists a constant $C > 0$ such that $\|\bar{z}_i - z^\star\| \leq C \mu_i$ for all sufficiently small $\mu_i$, which yields the stated bound.
\end{proof}

We can now establish the local ISS of the algorithm by showing that, under the stated assumptions, the iterates remain close to the true solution even in the presence of small perturbations.
\begin{theorem}[Local ISS of Algorithm \ref{alg:hybrid}]\label{thm:ISS-hybrid}
Consider the iterative system \eqref{eq:dynNLP}
generated by Algorithm \ref{alg:hybrid}.  
Suppose Assumptions~\ref{ass:regularity}, \ref{ass:barrier}, and \ref{ass:quantum} hold.
Then there exist constants $\varepsilon > 0$, $\delta > 0$, and functions $\beta \in \mathcal{KL}$, $\gamma \in \mathcal{K}$ such that, for any initial point $z_0$ with $\|z_0 - z^\star\| < \varepsilon$ and any perturbations satisfying
\( \|(\mu, \varepsilon_{\Delta z})\|_\infty := \max\{\sup_{i\in \mathbb{N}} \|\mu_i\|,\varepsilon_{\Delta z}\} < \delta \),
the iterates satisfy
\[
\|z_i - z^\star\| \le \beta(\|z_0 - z^\star\|, i) + \gamma\!\Big(\|(\mu, \varepsilon_{\Delta z})\|_\infty\Big), \quad \forall i \in \mathbb{N}.
\]
In particular, if $\mu_j \to 0$ and $\varepsilon_{\Delta z} \to 0$, then $z_i \to z^\star$ as $i \to \infty$.
\end{theorem}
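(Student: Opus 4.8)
The plan is to verify the ISS-Lyapunov sufficient condition stated immediately before the theorem, using the candidate function $V(z)=\|z-z^\star\|$, and then unroll the resulting one-step contraction into the $\mathcal{KL}$/$\mathcal{K}$ estimate. The central idea is to route the analysis through the barrier-dependent stationary point $\bar z_i$ of the subproblem at the current parameter $\mu_i$: by the triangle inequality $\|z_{i+1}-z^\star\|\le \|z_{i+1}-\bar z_i\|+\|\bar z_i-z^\star\|$, where Lemma~\ref{lem:barrier-error} already controls the second term by $\|\bar z_i-z^\star\|\le C\mu_i$. It therefore remains to show that one hybrid SQP step contracts the distance to $\bar z_i$ up to the quantum perturbation.

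First I would invoke the local convergence theory of SQP for the exact barrier subproblem. Under Assumption~\ref{ass:regularity} (LICQ, strict complementarity, SOSC, and $Q\succ 0$ on $\ker(\mathcal A)$), the KKT Jacobian of the barrier subproblem is nonsingular and uniformly bounded for $\mu_i\in(0,\bar\mu]$, so the exact Newton/SQP step $\Delta z^{(i)}_{\mathrm{Newton}}$ obeys the standard quadratic local estimate $\|z_i+\Delta z^{(i)}_{\mathrm{Newton}}-\bar z_i\|\le L\|z_i-\bar z_i\|^2$, provided a near-unit step is taken (justified below). Shrinking $\varepsilon$ so that $L(\varepsilon+C\bar\mu)=:\rho<1$ on the relevant neighborhood turns this into a genuine contraction $\|z_i+\Delta z^{(i)}_{\mathrm{Newton}}-\bar z_i\|\le\rho\|z_i-\bar z_i\|$. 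I would then fold in the quantum error: Assumption~\ref{ass:quantum} gives $\Delta z^{(i)}=\Delta z^{(i)}_{\mathrm{Newton}}+e_i$ with $\|e_i\|\le\varepsilon_{\Delta z}$, so $\|z_{i+1}-\bar z_i\|\le\rho\|z_i-\bar z_i\|+\varepsilon_{\Delta z}$. Combining with Lemma~\ref{lem:barrier-error} and re-expanding $\|z_i-\bar z_i\|\le\|z_i-z^\star\|+C\mu_i$ yields the ISS-Lyapunov inequality
\[
V(z_{i+1})\;\le\;\rho\,V(z_i)\;+\;c\,\|(\mu,\varepsilon_{\Delta z})\|_\infty,\qquad c:=(1+\rho)C+1,
\]
which has the required form $\alpha(V(z_i))+\gamma(\|(\mu,\varepsilon_{\Delta z})\|_\infty)$ with $\alpha(s)=\rho s<\mathrm{id}$ and $\gamma(s)=cs\in\mathcal{K}_\infty$.

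Unrolling this scalar linear recursion gives $V(z_i)\le\rho^i V(z_0)+\tfrac{c}{1-\rho}\|(\mu,\varepsilon_{\Delta z})\|_\infty$, i.e.\ the claimed estimate with $\beta(s,i)=\rho^i s\in\mathcal{KL}$ and $\gamma(s)=\tfrac{c}{1-\rho}s\in\mathcal{K}$; choosing $\delta$ so that the asymptotic term keeps the iterates inside the neighborhood of validity closes the induction. For the final assertion I would retain a time-varying forcing term, $V(z_{i+1})\le\rho V(z_i)+c_i$ with $c_i\to0$ whenever $\mu_i\to0$ and $\varepsilon_{\Delta z}\to0$; a stable recursion with vanishing input has vanishing state (equivalently, the asymptotic-gain property implied by ISS), so $z_i\to z^\star$.

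The main obstacle is the step flagged above: guaranteeing that the backtracking line search in Algorithm~\ref{alg:hybrid} actually accepts a near-unit step close to the solution, so that the fast local contraction is exercised rather than damped by $\alpha\ll1$. This is delicate because the quantum error $e_i$ may spoil the Armijo/descent test precisely when $\Delta z^{(i)}_{\mathrm{Newton}}$ is small, i.e.\ when $z_i$ is already near $\bar z_i$. I would resolve it by a two-regime argument: while $\|z_i-\bar z_i\|\gtrsim \varepsilon_{\Delta z}$ the perturbed direction remains a sufficient-descent direction, since the exact Newton direction carries a conditioning-controlled descent margin that dominates $e_i$, so the fraction-to-the-boundary rule together with Armijo admits a near-unit step and the contraction holds; once $\|z_i-\bar z_i\|\lesssim\varepsilon_{\Delta z}$ the iterate already lies inside the target neighborhood of radius $O(\varepsilon_{\Delta z})$, where the ISS estimate holds trivially. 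Uniformity of $\rho$, $C$, and the descent margin over $\mu_i\in(0,\bar\mu]$---inherited from the uniform nonsingularity of the KKT Jacobian in Assumption~\ref{ass:regularity}---is what ties the per-iteration estimate into the single recursion above.
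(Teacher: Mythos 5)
Your proposal is correct and follows essentially the same route as the paper's proof: both pass through the barrier stationary point $\bar z_i$ via the triangle inequality, combine the local Newton contraction with the quantum-error bound of Assumption~\ref{ass:quantum} and the bound $\|\bar z_i - z^\star\|\le C\mu_i$ from Lemma~\ref{lem:barrier-error}, and arrive at the recursion $\|z_{i+1}-z^\star\|\le\rho\,\|z_i-z^\star\|+C'\mu_i+\varepsilon_{\Delta z}$. The remaining differences are cosmetic or in your favor: you unroll the linear recursion directly to obtain $\beta(s,i)=\rho^i s$ and $\gamma(s)=\tfrac{c}{1-\rho}\,s$, whereas the paper squares the norm into $V(z)=\|z-z^\star\|^2$ and invokes Young's inequality before concluding abstractly, and you additionally justify the contraction via the quadratic Newton estimate and address near-unit step acceptance in the line search, points the paper asserts without proof.
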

\begin{proof}
The proof follows standard ISS arguments.  
Under Assumption~\ref{ass:regularity}, the exact Newton step for the barrier subproblem at $\mu_i$ satisfies the local contraction
$\| z_i + \alpha_i \Delta z_{\mathrm{Newton}}^{(i)} - \bar z_i \|
\;\le\; \rho \| z_i - \bar z_i \|$, $\rho \in (0,1),~\alpha_i \in (0,1]$.
Since the algorithm applies the inexact step $\Delta z^{(i)}$,
$z_{i+1} = z_i + \alpha_i \Delta z^{(i)}
        = z_i + \alpha_i \Delta z_{\mathrm{Newton}}^{(i)}
          + \alpha_i\!\big(\Delta z^{(i)}-\Delta z_{\mathrm{Newton}}^{(i)}\big)$.
By Assumption~\ref{ass:quantum} and $\alpha_i\le 1$, we obtain 
$\|z_{i+1} - \bar z_i\|= \|z_i + \alpha_i \Delta z^{(i)}_{\text{Newton}} - \bar z_i+ \alpha_i(\Delta z^{(i)} - \Delta z^{(i)}_{\text{Newton}})\|\leq \rho\,\|z_i - \bar z_i\| + \varepsilon_{\Delta z}$.
%&= \|z_i + \alpha_i \Delta z^{(i)} - \bar z_i\| \\
%&= \|z_i + \alpha_i \Delta z^{(i)}_{\text{Newton}} - \bar z_i
%      + \alpha_i(\Delta z^{(i)} - \Delta z^{(i)}_{\text{Newton}})\| \\
%&\leq \|z_i + \alpha_i \Delta z^{(i)}_{\text{Newton}} - \bar z_i\| + \alpha_i \|\Delta z^{(i)} - \Delta z^{(i)}_{\text{Newton}}\| \\
%&\leq \rho\,\|z_i - \bar z_i\| + \varepsilon_{\Delta z}.\end{align*}
Applying the triangle inequality to $\|z_{i+1} - z^\star\|$, using Lemma \ref{lem:barrier-error}, and the previous result, we get
$\|z_{i+1} - z^\star\| \leq \|z_{i+1} - \bar z_i\| + \|\bar z_i - z^\star\| \le \rho\,\|z_i - \bar z_i\| + \varepsilon_{\Delta z} + C \mu_i$.
Since,
$\|z_i - \bar z_i\|
%\leq \|z_i - z^\star\| + \| z^\star - \bar z_i\| 
\leq \|z_i - z^\star\| + C \mu_i$.
Substituting this into the previous inequality yields
$\|z_{i+1} - z^\star\| \leq \rho\,\|z_i - z^\star\| + C' \mu_i + \varepsilon_{\Delta z} $
where $C'=\rho C+C$ is a constant.

Now, define the Lyapunov function $V(z_i) := \|z_i - z^\star\|^2$. From the previous bounds,
it follows that $V(z_{i+1})\leq \rho^2 V(z_i) + 2\rho \sqrt{V(z_i)} (C' \mu_i + \varepsilon_{\Delta z}) + (C' \mu_i + \varepsilon_{\Delta z})^2$.
%\begin{align*}V(z_{i+1}) &= \|z_{i+1} - z^\star\|^2 \\&\leq \left( \rho \|z_i - z^\star\| + C' \mu_i + \varepsilon_{\Delta z} \right)^2 \\&\leq \rho^2 V(z_i) + 2\rho \sqrt{V(z_i)} (C' \mu_i + \varepsilon_{\Delta z}) + (C' \mu_i + \varepsilon_{\Delta z})^2.\end{align*}
Applying Young's inequality for any $\varepsilon > 0$,
$2\rho \sqrt{V(z_i)} (C' \mu_i + \varepsilon_{\Delta z}) \leq \varepsilon V(z_i) + \frac{\rho^2}{\varepsilon} (C' \mu_i + \varepsilon_{\Delta z})^2$.
Choosing $\varepsilon$ so that $\rho^2 + \varepsilon < 1$, we get
$V(z_{i+1}) \leq (\rho^2 + \varepsilon) V(z_i) + \left(1 + \frac{\rho^2}{\varepsilon}\right) (C' \mu_i + \varepsilon_{\Delta z})^2$.
Thus, %setting $\alpha(s) = (\rho^2 + \varepsilon) s$ and $\gamma(r) = c r^2$ for some constant $c>0$, 
we have
$V(z_{i+1}) \leq \alpha(V(z_i)) + \gamma(\|(\mu_i, \varepsilon_{\Delta z})\|_\infty)$,
from which one can conclude the claim.
\end{proof}

As a direct consequence, the following result  provides an explicit bound on the distance between the iterates and the true solution.
\begin{corollary}[Explicit Local Error Bound]\label{cor:explicit-bound}
Under the assumptions of Theorem~\ref{thm:ISS-hybrid}, there exist $\varepsilon > 0$, $\delta > 0$ and constants $C_1, C_2, C_3 > 0$ such that, for any initial point $z_0$ with $\|z_0 - z^\star\| < \varepsilon$ and any perturbations satisfying
\( \|(\mu, \varepsilon_{\Delta z})\|_\infty < \delta \), for all $i \in \mathbb{N}$,
$\|z_i - z^\star\| \leq C_1 \rho^i \|z_0 - z^\star\| + C_2 \sup_{k \leq i} \mu_k + C_3 \varepsilon_{\Delta z}$,
where $\rho \in (0,1)$ is the local contraction factor for the exact Newton step.
\end{corollary}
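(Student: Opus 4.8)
The plan is to take the one-step recursive inequality already derived in the proof of Theorem~\ref{thm:ISS-hybrid}, namely $\|z_{i+1} - z^\star\| \le \rho\,\|z_i - z^\star\| + C'\mu_i + \varepsilon_{\Delta z}$ with $\rho \in (0,1)$ and $C' = (1+\rho)C$, and convert it into a closed-form estimate by unrolling the linear recursion and summing the resulting geometric series. Writing $a_i := \|z_i - z^\star\|$ and $w_i := C'\mu_i + \varepsilon_{\Delta z}$, the inequality reads $a_{i+1} \le \rho a_i + w_i$, a scalar discrete Gr\"onwall-type recursion whose solution is standard.

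Before unrolling, I first need to guarantee that this one-step bound is valid at \emph{every} iteration; this is the only genuinely nonvacuous step. The contraction is local, so it holds only while $z_i$ lies in the neighborhood $\|z_i - z^\star\| < \varepsilon$ on which Assumption~\ref{ass:regularity} and Lemma~\ref{lem:barrier-error} apply. I would therefore fix $\varepsilon$ from the local contraction region and then choose $\delta$ small enough that $(C'+1)\delta \le (1-\rho)\varepsilon$; under this choice, if $a_i < \varepsilon$ then $a_{i+1} \le \rho\varepsilon + (C'+1)\delta \le \varepsilon$, so the ball of radius $\varepsilon$ is forward invariant and the recursion may be legitimately iterated for all $i\in\mathbb{N}$. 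This forward-invariance bookkeeping is the main obstacle, since the remaining manipulations are routine.

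Given validity at every step, unrolling yields $a_i \le \rho^i a_0 + \sum_{k=0}^{i-1} \rho^{\,i-1-k} w_k$. I would then bound each $\mu_k$ by $\sup_{k\le i}\mu_k$ and pull the constant $\varepsilon_{\Delta z}$ outside, so that $\sum_{k=0}^{i-1}\rho^{\,i-1-k}w_k \le \big(C'\sup_{k\le i}\mu_k + \varepsilon_{\Delta z}\big)\sum_{j=0}^{i-1}\rho^{\,j}$ after re-indexing $j = i-1-k$. Bounding the geometric sum by $\sum_{j=0}^{\infty}\rho^j = 1/(1-\rho)$ gives $a_i \le \rho^i a_0 + \tfrac{C'}{1-\rho}\sup_{k\le i}\mu_k + \tfrac{1}{1-\rho}\varepsilon_{\Delta z}$.

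Finally, I would read off the constants $C_1 = 1$, $C_2 = C'/(1-\rho)$, and $C_3 = 1/(1-\rho)$, all strictly positive since $\rho \in (0,1)$ and $C' > 0$, which is exactly the claimed bound. I would close by remarking that this makes the abstract $\mathcal{KL}/\mathcal{K}$ decomposition of Theorem~\ref{thm:ISS-hybrid} explicit: the transient (class-$\mathcal{KL}$) term is $\beta(s,i) = C_1\rho^i s$, while the asymptotic gain $\gamma$ splits into the linear barrier contribution $C_2\sup_{k\le i}\mu_k$ and the quantum-error contribution $C_3\varepsilon_{\Delta z}$, confirming that the corollary is the quantitative specialization of the ISS estimate rather than a new result.
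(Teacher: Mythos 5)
Your proposal is correct and follows essentially the same route as the paper: take the one-step recursion $\|z_{i+1}-z^\star\|\le\rho\|z_i-z^\star\|+C'\mu_i+\varepsilon_{\Delta z}$ from the proof of Theorem~\ref{thm:ISS-hybrid}, unroll it, and bound the geometric sums by $1/(1-\rho)$ to read off the constants. Your explicit forward-invariance argument (choosing $\delta$ with $(C'+1)\delta\le(1-\rho)\varepsilon$ so the recursion is valid at every iteration) is a careful detail the paper leaves implicit, but it refines rather than changes the argument.
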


\begin{proof}
From the proof of Theorem~\ref{thm:ISS-hybrid}, we have the recursion
$\|z_{i+1} - z^\star\| \leq \rho \|z_i - z^\star\| + C_2 \mu_i + C_3 \varepsilon_{\Delta z}$,
where $\rho \in (0,1)$ and $C_2, C_3 > 0$ are constants.
We can unroll this recursion and conclude $\|z_i - z^\star\| \leq \rho^i \|z_0 - z^\star\| + C_2 \sum_{k=0}^{i-1} \rho^{i-1-k} \mu_k + C_3 \sum_{k=0}^{i-1} \rho^{i-1-k} \varepsilon_{\Delta z}$. From this, it follows the result.
%by induction. For $i=0$,$\|z_1 - z^\star\| \leq \rho \|z_0 - z^\star\| + C_2 \mu_0 + C_3 \varepsilon_{\Delta z}$.For $i=1$,$\|z_2 - z^\star\| \leq \rho \|z_1 - z^\star\| + C_2 \mu_1 + C_3 \varepsilon_{\Delta z}\leq \rho \left[ \rho \|z_0 - z^\star\| + C_2 \mu_0 + C_3 \varepsilon_{\Delta z} \right] + C_2 \mu_1 + C_3 \varepsilon_{\Delta z}= \rho^2 \|z_0 - z^\star\| + \rho C_2 \mu_0 + \rho C_3 \varepsilon_{\Delta z} + C_2 \mu_1 + C_3 \varepsilon_{\Delta z}$.Proceeding by induction, for any $i \geq 0$,$\|z_i - z^\star\| \leq \rho^i \|z_0 - z^\star\| + C_2 \sum_{k=0}^{i-1} \rho^{i-1-k} \mu_k + C_3 \sum_{k=0}^{i-1} \rho^{i-1-k} \varepsilon_{\Delta z}$.Since $\varepsilon_{\Delta z}$ is constant, the sum $\sum_{k=0}^{i-1} \rho^{i-1-k} \varepsilon_{\Delta z} = \varepsilon_{\Delta z} \sum_{k=0}^{i-1} \rho^{i-1-k} = \varepsilon_{\Delta z} \frac{1-\rho^i}{1-\rho} \leq \frac{\varepsilon_{\Delta z}}{1-\rho}$.Similarly, if we take $C_2 \sup_{k \leq i} \mu_k$ as an upper bound for the sum involving $\mu_k$, we get$C_2 \sum_{k=0}^{i-1} \rho^{i-1-k} \mu_k \leq C_2 \sup_{k \leq i} \mu_k \sum_{k=0}^{i-1} \rho^{i-1-k} \leq \frac{C_2}{1-\rho} \sup_{k \leq i} \mu_k$.Thus, setting $C_1 = 1$, $C_2' = \frac{C_2}{1-\rho}$, and $C_3' = \frac{C_3}{1-\rho}$, we obtain$\|z_i - z^\star\| \leq C_1 \rho^i \|z_0 - z^\star\| + C_2' \sup_{k \leq i} \mu_k + C_3' \varepsilon_{\Delta z}$,which proves the corollary.
\end{proof}

We can now establish a global convergence result, showing that under the stated assumptions, the algorithm generates bounded iterates that converge to a neighborhood of the set of KKT points of the original problem.
\begin{theorem}[Global Convergence to a Neighborhood of $z^\star$]
Consider the iterative system \eqref{eq:dynNLP} generated by Algorithm~\ref{alg:hybrid}.  
Suppose Assumptions~\ref{ass:regularity}--\ref{ass:quantum} hold, and let $\{\mu_i\}$ be a positive, non-increasing  sequence with $\mu_i \to \mu_{\min} \ge 0$. 
Assume further that the line search at each iteration enforces explicit decrease of equality infeasibility, i.e.,
$\|G(z_{i+1})\| \leq (1-\sigma\alpha_i)\|G(z_i)\|$, for some $\sigma \in (0,1)$,
where $\alpha_i$ is the accepted step size.

Then, every sequence $\{z_i\}$ generated by the algorithm is bounded. 
Moreover, for small enough $\mu_{\min}$ and $\varepsilon_{\Delta z}$, every sequence $\{z_i\}$ satisfies
\begin{equation}\label{eq:limsup}
\limsup_{i\to\infty}\,\mathrm{dist}(z_i,\mathcal{Z}^\star)\;\le\;C_1\,\varepsilon_{\Delta z} + C_2\,\mu_{\min},    
\end{equation}
where $\mathcal{Z}^\star$ is the set of KKT points (stationary points) of the original problem, and $C_1, C_2 > 0$ are constants depending only on problem regularity and line-search parameters.
\end{theorem}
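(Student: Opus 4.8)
The plan is to combine a global descent argument that confines the iterates to a compact set with the local contraction already established in Theorem~\ref{thm:ISS-hybrid}, and then to pass to the $\limsup$ through a scalar comparison recursion. First I would establish boundedness. Each accepted step satisfies the Armijo condition, so the barrier-augmented merit $\tilde F(\cdot;\mu_i)$ decreases monotonically at fixed $\mu$; together with the assumed infeasibility decrease $\|G(z_{i+1})\| \le (1-\sigma\alpha_i)\|G(z_i)\|$ this shows that $\{\|G(z_i)\|\}$ is non-increasing, hence bounded by $\|G(z_0)\|$ and convergent. Because $\phi(s)\to+\infty$ as $s\to 0^-$ (Assumption~\ref{ass:barrier}), the fraction-to-the-boundary and backtracking rule keeps every iterate strictly interior and bounded away from the boundary of $\{H\le 0\}$, while $F$ is bounded below and $\{\mu_i\}$ is non-increasing and bounded. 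Consequently the iterates remain in a compact sublevel set of $\tilde F$, which gives boundedness.

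Next I would identify the accumulation points. Summing the Armijo sufficient-decrease inequality over iterations forces $\alpha_i\, g^{(i)T}\Delta z^{(i)} \to 0$; combined with $Q^{(i)}\succ 0$ on $\ker(\mathcal A^{(i)})$ from the damping in Assumption~\ref{ass:regularity} and with $\|G(z_i)\|\to 0$, any accumulation point is an $\varepsilon_{\Delta z}$-approximate stationary point of the barrier subproblem at the limiting parameter $\mu_{\min}$, where the $\varepsilon_{\Delta z}$ slack originates from Assumption~\ref{ass:quantum}. Invoking Lemma~\ref{lem:barrier-error}, once $\mu_i\le\bar\mu$ the barrier stationary point $\bar z_i$ obeys $\|\bar z_i - z^\star\|\le C\mu_i$, so for $\mu_{\min}$ and $\varepsilon_{\Delta z}$ small enough these accumulation points lie inside the neighborhood $\{\|z-z^\star\|<\varepsilon\}$ on which Theorem~\ref{thm:ISS-hybrid} supplies the one-step contraction.

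Once the iterates enter this local basin I would invoke the per-step estimate established in the proofs of Theorem~\ref{thm:ISS-hybrid} and Corollary~\ref{cor:explicit-bound},
\[
\|z_{i+1}-z^\star\| \le \rho\,\|z_i - z^\star\| + C_2\,\mu_i + C_3\,\varepsilon_{\Delta z}, \qquad \rho\in(0,1).
\]
Since $\{\mu_i\}$ is non-increasing with $\mu_i\to\mu_{\min}$, the forcing term $b_i := C_2\mu_i + C_3\varepsilon_{\Delta z}$ converges to $b_\infty = C_2\mu_{\min} + C_3\varepsilon_{\Delta z}$. The standard comparison result for scalar recursions $a_{i+1}\le\rho a_i + b_i$ with $b_i\to b_\infty$ yields $\limsup_{i\to\infty} a_i \le b_\infty/(1-\rho)$, and using $\mathrm{dist}(z_i,\mathcal Z^\star)\le\|z_i - z^\star\|$ (since $z^\star\in\mathcal Z^\star$) gives exactly \eqref{eq:limsup} after renaming the constants as $C_2/(1-\rho)$ and $C_3/(1-\rho)$.

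The hard part will be the global-to-local transition in the second step: guaranteeing that the iterates actually reach the local basin of attraction rather than merely remaining bounded. This is the classical difficulty in global SQP convergence, and here it rests on the two-pronged descent—monotone decrease of the merit function in the optimality direction and geometric decrease of the infeasibility $\|G\|$—which together prevent the iterates from stalling away from the feasible stationary set. The explicit infeasibility-decrease hypothesis is precisely what makes this argument go through without invoking additional global regularity of $\Phi$.
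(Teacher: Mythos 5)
Your proposal is correct and follows essentially the same route as the paper: boundedness via monotone decrease of the coercive barrier merit $\tilde F(\cdot;\mu_i)$ confining $\{z_i\}$ to a compact sublevel set, decay of $\|G(z_i)\|$ from the infeasibility condition, and then the one-step recursion $\|z_{i+1}-z^\star\| \le \rho\|z_i-z^\star\| + C_2\mu_i + C_3\varepsilon_{\Delta z}$ from Theorem~\ref{thm:ISS-hybrid} to obtain \eqref{eq:limsup}. In fact, your accumulation-point argument (approximate barrier stationarity plus Lemma~\ref{lem:barrier-error} placing limit points in the local basin) and the explicit scalar comparison lemma giving $\limsup a_i \le b_\infty/(1-\rho)$ supply precisely the global-to-local transition that the paper compresses into the single assertion that ``for sufficiently small $\varepsilon_{\Delta z}$ and $\mu_{\min}$, the error recursion established in the ISS analysis holds.''
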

\begin{proof}
To establish boundedness of the sequence $\{z_i\}$, observe that at each iteration, the algorithm solves a barrier-augmented subproblem with parameter $\mu_i$ and updates $z_{i+1}$ via a step that maintains strict feasibility with respect to the inequality constraints, i.e., $H(z_{i+1}) < 0$ for all $i$.
Given the barrier-augmented objective $\tilde F(z; \mu_i)$, it follows by construction that $z_i$ is strictly feasible, so $\tilde F(z_i; \mu_i)$ is finite for all $i$. Furthermore, the line search at each iteration enforces sufficient decrease of $\tilde F(z; \mu_i)$.
Since $\tilde F(z; \mu_i)$ is coercive on the strictly feasible set 
%(i.e., $\tilde F(z; \mu_i) \to +\infty$ as $z$ approaches the boundary of the feasible set or as $\|z\| \to \infty$), 
and since the sequence $\{\tilde F(z_i; \mu_i)\}$ is non-increasing, it follows that the sequence $\{z_i\}$ is contained in a sublevel set
$\mathcal{L}_0 := \left\{ z \in \mathbb{R}^d : H(z) < 0, \ \tilde F(z; \mu_0) \leq \tilde F(z_0; \mu_0) \right\}$,
which is compact by the coercivity of $\tilde F$ and the strict feasibility of $z_i$. Therefore, the sequence $\{z_i\}$ is bounded.
Note also that from the infeasibility decrease condition, it follows that $\|G(z_i)\|$ converges to zero as $i \to \infty$ because $\alpha_i \in (0,1]$ and $\sigma \in (0,1)$.
Now, we conclude \eqref{eq:limsup} by noting that for sufficiently small $\varepsilon_{\Delta z}$ and $ \mu_{\min}$, the error recursion established in the ISS analysis (see Theorem~\ref{thm:ISS-hybrid})
%$\|z_{i+1} - z^\star\| \leq \rho \|z_i - z^\star\| + C_2 \mu_i + C_3 \varepsilon_{\Delta z}$,
holds.
%for some $\rho \in (0,1)$ and constants $C_2, C_3 > 0$, 
% where $z^\star$ is any KKT point of the original problem.  Thus, by standard arguments for linear difference inequalities, it follows that
% $\limsup_{i \to \infty} \|z_i - z^\star\| \leq \frac{C_2}{1-\rho} \mu_{\min} + \frac{C_3}{1-\rho} \varepsilon_{\Delta z}$.
% Since $z^\star$ is any KKT point, we have
% \eqref{eq:limsup}
% for suitable constants $C_1, C_2 > 0$.
\end{proof}

%\begin{remark}[Condition Numbers as the Bottleneck]The complexity in Proposition~\ref{prop:schur-complexity} highlights that the quantum runtime depends only polylogarithmically on the problem dimension $d$, while scaling linearly in the condition numbers $\kappa_Q$ and $\kappa_S$. This represents a fundamental shift compared to the classical cubic dependence on $d$ in dense Schur solves. In practice, the efficiency of the quantum subroutine therefore hinges on the spectral conditioning of the Hessian and the Schur complement. Moderate condition numbers yield substantial quantum speedups, whereas ill-conditioned systems may negate these advantages. \end{remark}

The next result establishes the overall computational complexity of the proposed hybrid quantum-classical barrier SQP algorithm, showing it achieves exponential speedup in problem size compared to classical methods

\begin{theorem}[Time Complexity]
\label{prop:hybrid-sqp-complexity}
Consider that the barrier parameter is reduced geometrically according with $\mu^{(i+1)} =  \beta \mu^{(i)}$, $\beta\in(0,1)$). 
Then, Algorithm~\ref{alg:hybrid} 
returns the terminal iterate $z^{(K)}$ with time complexity 
$\mathcal{O}\!\Big(
   %\sqrt{q}\,
   \log(1/\mu_{\min}) \cdot \mathrm{polylog}(n_z/\varepsilon) \, \kappa\, \log(1/\varepsilon^\prime)
\Big),$
where $\kappa := \max\{\kappa_Q,\kappa_S\}$, $n_z$ is the dimension of the decision vector, and $q$ is the number of inequality constraints.
\end{theorem}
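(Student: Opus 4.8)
The plan is to decompose the total runtime into two factors: (i) the number of outer barrier iterations executed by Algorithm~\ref{alg:hybrid}, and (ii) the per-iteration cost, which is dominated by the single call to the quantum Schur step. The total complexity then follows by multiplying these two factors, since the cost of each outer pass is uniform in $i$ up to polylogarithmic overhead.

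First I would count the outer iterations. With the geometric schedule $\mu^{(i+1)}=\beta\mu^{(i)}$ one has $\mu^{(i)}=\beta^i\mu_0$, so the barrier loop terminates once $\beta^K\mu_0\le\mu_{\min}$, i.e.\ after $K=\lceil \log(\mu_0/\mu_{\min})/\log(1/\beta)\rceil$ iterations. Treating $\mu_0$ and $\beta$ as fixed constants, this gives $K=\mathcal{O}(\log(1/\mu_{\min}))$. The early-termination branch of the convergence check can only reduce the iteration count, so $K$ is an upper bound on the number of outer passes in all cases.

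Next I would bound the cost of a single outer iteration. Step~1 (assembling $g$, $Q$, $r$, $\mathcal{A}$ and their block encodings) costs $\mathcal{O}(\mathrm{polylog}(n_z/\varepsilon))$ under the quantum-accessible data-structure assumption invoked in Proposition~\ref{prop:schur-complexity}. Step~2 calls \texttt{QuantumSchurStep} exactly once; by Proposition~\ref{prop:schur-complexity} this costs $\mathcal{O}(\mathrm{polylog}(n_z/\varepsilon)\,\kappa\,\log(1/\varepsilon'))$ with $\kappa=\max\{\kappa_Q,\kappa_S\}$. Step~3 (fraction-to-the-boundary capping and backtracking) together with the convergence check are classical; the argument is that the backtracking loop terminates in a number of steps bounded independently of $n_z$, a consequence of the Armijo condition combined with $g^T\Delta z<0$ under Assumption~\ref{ass:regularity}, which forces a uniform lower bound on the accepted step size. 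Hence the per-iteration cost is dominated by the two QSVT inversions inside Step~2, giving $\mathcal{O}(\mathrm{polylog}(n_z/\varepsilon)\,\kappa\,\log(1/\varepsilon'))$ per outer iteration. Multiplying by $K=\mathcal{O}(\log(1/\mu_{\min}))$ yields the stated bound.

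The main obstacle I expect is justifying that the classical portion of each iteration does not reintroduce a polynomial-in-$n_z$ factor. In particular, the measurement readout that converts the encoded state $U_{\Delta z}$ into the classical vector $\Delta z$, and the repeated evaluations of $\tilde F$ and the constraints during backtracking, must be taken to respect the polylog cost model rather than costing $\Omega(n_z)$ as naive state tomography or dense function evaluation would. I would isolate this as an explicit modeling assumption, consistent with Proposition~\ref{prop:schur-complexity} and the success-probability/amplitude-amplification remark, and separately argue that the backtracking iteration count is $\mathcal{O}(1)$ (or at most logarithmic). Under these conditions the two QSVT inversions dominate and the total runtime is $\mathcal{O}\!\big(\log(1/\mu_{\min})\cdot\mathrm{polylog}(n_z/\varepsilon)\,\kappa\,\log(1/\varepsilon')\big)$, establishing the claimed exponential speedup in the problem dimension.
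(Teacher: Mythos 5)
Your proof follows essentially the same route as the paper's: count the outer iterations under the geometric schedule, $K=\lceil \log(\mu_0/\mu_{\min})/\log(1/\beta)\rceil=\mathcal{O}(\log(1/\mu_{\min}))$, invoke Proposition~\ref{prop:schur-complexity} for the per-iteration cost, argue the quantum Schur step dominates, and multiply the two factors. You are in fact slightly more careful than the paper on one point: the paper's proof states that the classical linearization step is ``polynomial in $n_z$ and dominated by the next step'' even though the next step costs only $\mathrm{polylog}(n_z)$ --- an internal inconsistency --- whereas you explicitly isolate the modeling assumption that data access, measurement readout, and the backtracking evaluations must themselves respect the polylogarithmic cost model for the stated bound to hold.
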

\begin{proof}
Each outer iteration of Algorithm~\ref{alg:hybrid} performs:
(i) linearization and quadratic approximation (classical, polynomial in $n_z$ and dominated by the next step);
(ii) the Quantum Schur Step, which by Proposition~\ref{prop:schur-complexity}
costs $\mathcal{O}(\mathrm{polylog}(n_z/\varepsilon)\,\kappa\,\log(1/\varepsilon'))$; and
(iii) fraction-to-the-boundary initialization and backtracking (classical, negligible compared to (ii)).
Hence the per-iteration runtime is dominated by (ii), and the total runtime is $K$ times this cost.
Under the geometric update, $K=\big\lceil \log(\mu_0/\mu_{\min})/\lvert\log(1/\beta)\rvert \big\rceil$,
which yields the stated bound.
\end{proof}

\begin{figure}[t]
    \centering
    \includegraphics[width=\linewidth]{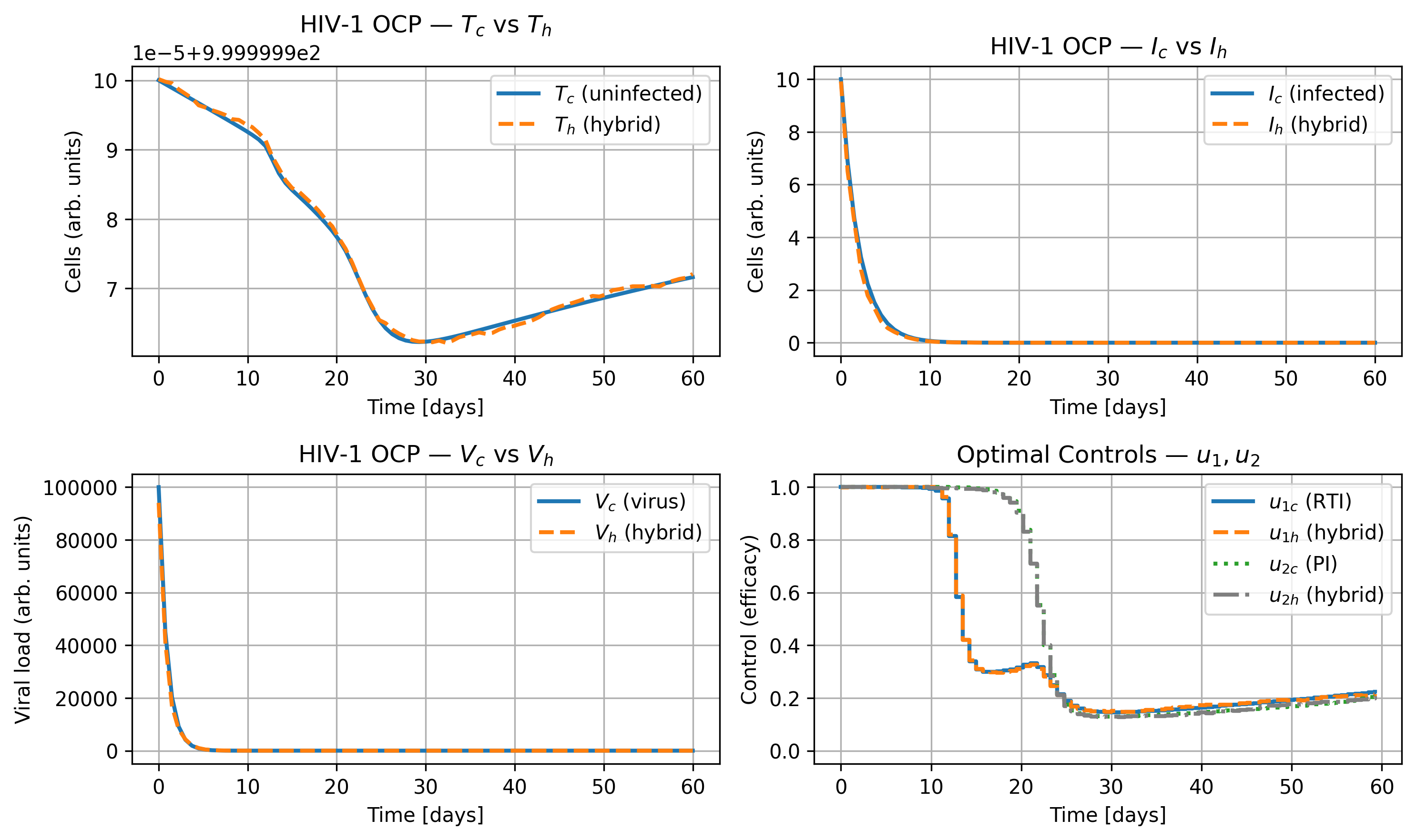}
    \caption{\small Comparison of classical SQP and hybrid quantum-classical SQP for the HIV-1 model. 
    %Top: uninfected and infected T-cell trajectories. Bottom: viral load and optimal controls. Both methods achieve suppression of infection and recovery of T cells with nearly identical solutions.
    }
    \label{fig:HIV_sim}
\vspace{-7mm}
\end{figure}

\section{Simulation Results}
We evaluate the proposed hybrid quantum-classical SQP algorithm on the Perelson HIV-1 model describing the interactions between uninfected CD4$^+$ T cells ($T$), infected cells ($I$), and free virus particles ($V$)
detailed in \cite{perelson1999}.
We introduce two drug controls: $u_{1,n}$: efficacy of reverse transcriptase inhibitors (RTIs), and $u_{2,n}$: efficacy of protease inhibitors (PIs). 
%The controlled dynamics in the discretized form become:\begin{align*}T_{n+1} &= T_n + \Delta t \cdot \left( \lambda - d_T T_n - \beta (1 - u_{1,n}) T_n V_n \right), \\I_{n+1} &= I_n + \Delta t \cdot \left( \beta (1 - u_{1,n}) T_n V_n - \delta I_n \right), \\V_{n+1} &= V_n + \Delta t \cdot \left( p (1 - u_{2,n}) I_n - c V_n \right).\end{align*}
The objective is to minimize viral load and drug usage while maintaining healthy T cell levels. The cost function is
$J = \sum_{n=0}^{N-1} \left[ q_V V_n^2 + q_I I_n^2 + q_T (T_{\text{ref}} - T_n)^2 + r_1 u_{1,n}^2 + r_2 u_{2,n}^2 \right]+ q_V^f V_N^2 + q_I^f I_N^2 + q_T^f (T_{\text{ref}} - T_N)^2$,
where $q_V$, $q_I$, $q_T$ are state weights, $r_1$, $r_2$ are control weights, and $T_{\text{ref}}$ is the desired T cell level, the control bounds: $0 \le u_{1,n}, u_{2,n} \le 1$, and state bounds: $T_n, I_n, V_n \ge 0$. 
%We evaluate the proposed hybrid quantum--classical SQP algorithm on the Perelson HIV-1 model 
The horizon length was $N=60$ days with time step $\Delta t=1$. Initial states were chosen as $T_0=10^3$, $I_0=10$, $V_0=10^5$, and system parameters were taken from \cite{perelson1999}. 
%The cost weights were set to $q_V=q_I=1$, $q_T=0.1$, $r_1=r_2=0.01$, and terminal weights $q^f_V=q^f_I=1$, $q^f_T=0.1$. 
Control bounds $0 \leq u_{1,n},u_{2,n} \leq 1$ and state positivity were enforced by logarithmic barriers.
%The barrier parameter was reduced geometrically with $\mu_{i+1}=0.5\mu_i$ until $\mu_{\min}=10^{-6}$, and tolerances $\varepsilon_{\text{opt}}=10^{-4}$, $\varepsilon_{\text{feas}}=10^{-5}$.

Figure~\ref{fig:HIV_sim} shows the closed-loop trajectories obtained from the classical SQP solver and the hybrid solver. % The viral load is suppressed within 15--20 days, infected cells vanish, and T-cell levels recover toward the reference. 
%The optimal controls are initially saturated and then gradually decrease to maintain suppression.
Both approaches converge to nearly identical solutions, with small deviations explained by the quantum Schur step tolerance~$\varepsilon_{\Delta z}$. While the classical SQP remains efficient for this small-scale problem, the hybrid method retains its convergence guarantees and offers favorable scaling for larger nonlinear optimal control problems.

\section{Conclusions}
In a classical SQP method, each iteration requires solving a Schur complement
system of dimension $n_z$, which typically incurs $\mathcal{O}(n_z^3)$ time.
%This cubic dependence on the problem size dominates the runtime. 
By contrast,
the quantum Schur step replaces this cost with only polylogarithmic scaling
in $n_z$. However, it must be stressed that the QSVT step requires an estimate
of the condition number.
In our particular OCP structure, assuming that we set a positive definite weighting matrix $W$ for the stage cost $\ell(z_k)=\tfrac12\|r_k(z_k)\|_{W}^{2}$, 
this estimation is simplified because the
stage-wise Hessian blocks admit a Gauss–Newton form:
$Q_k \;=\; \big(\nabla_{z_k} r_k(z_k)\big)^T \, W \, \big(\nabla_{z_k} r_k(z_k)\big)
\;+\; B_k
\;+\; \sigma I$, where
 $B_k \ge 0$ is the barrier curvature term, and $\sigma I$ is a Levenberg–Marquardt damping term.
If $B_k$ is small and $W$ is well conditioned,
a practical approximation for the condition number is
$\kappa\!(Q_k) \;\approx\;
\tfrac{\|\nabla_{z_k} r_k(z_k)\|_2^{\,2} + \sigma}{\sigma}$. %Increasing $\sigma$ improves the conditioning of $Q_k$.

%In this work, we employ the logarithmic barrier  \[\phi(h) \;=\; -\log(-h),\]  which ensures that the iterates remain strictly within the interior of the inequality constraints.

\vspace{-4mm}

\bibliographystyle{IEEEtran}%{IEEEconf}
\bibliography{IEEEabrv,ACC}
\end{document}